%% file: main.tex
\documentclass[acmsmall,nonacm]{acmart}
\graphicspath{{Figures/}}

\usepackage{blkarray}
\usepackage{amsthm}
\usepackage{algorithm}
\usepackage{algpseudocode}

\theoremstyle{plain}
\newtheorem{theorem}{Theorem}[section]
\theoremstyle{remark}
\newtheorem{remark}{Remark}[section]

\AtBeginDocument{%
}

\begin{document}

\title{On the Accuracy Limits of Sequential Recommender Systems: An Entropy-Based Approach}

\author{En Xu}
\affiliation{%
  \department{Department of Electronic Engineering}
  \institution{Tsinghua University}
  \city{Beijing}
  \country{China}
}
\email{xuen@mail.tsinghua.edu.cn}

\author{Jingtao Ding}
\affiliation{%
  \department{Department of Electronic Engineering}
  \institution{Tsinghua University}
  \city{Beijing}
  \country{China}
}
\email{dingjt15@tsinghua.org.cn}

\author{Yong Li}
\affiliation{%
  \department{Department of Electronic Engineering}
  \institution{Tsinghua University}
  \city{Beijing}
  \country{China}
}
\email{liyong07@tsinghua.edu.cn}

\begin{abstract}
Sequential recommender systems have achieved steady gains in offline accuracy, yet it remains unclear how close current models are to the intrinsic accuracy limit imposed by the data. A reliable, model-agnostic estimate of this ceiling would enable principled difficulty assessment and headroom estimation before costly model development. Existing predictability analyses typically combine entropy estimation with Fano's inequality inversion; however, in recommendation they are hindered by sensitivity to candidate-space specification and distortion from Fano-based scaling in low-predictability regimes. We develop an entropy-induced, training-free approach for quantifying accuracy limits in sequential recommendation, yielding a candidate-size-agnostic estimate. Experiments on controlled synthetic generators and diverse real-world benchmarks show that the estimator tracks oracle-controlled difficulty more faithfully than baselines, remains insensitive to candidate-set size, and achieves high rank consistency with best-achieved offline accuracy across state-of-the-art sequential recommenders (Spearman \(\rho\) up to 0.914). It also supports user-group diagnostics by stratifying users by novelty preference, long-tail exposure, and activity, revealing systematic predictability differences. Furthermore, predictability can guide training data selection: training sets constructed from high-predictability users yield strong downstream performance under reduced data budgets. Overall, the proposed estimator provides a practical reference for assessing attainable accuracy limits, supporting user-group diagnostics, and informing data-centric decisions in sequential recommendation.
\end{abstract}

\keywords{predictability, entropy, information theory, sequential recommendation}

\maketitle
\pagestyle{plain}

\input{sections/01-introduction}
\input{sections/02-related-work}
\input{sections/03-preliminaries}
\input{sections/04-methodology}
\input{sections/05-experiments}

\section{Conclusion}
This work studies the intrinsic accuracy limits of sequential recommender systems and aims to provide a model-agnostic reference for difficulty assessment and headroom estimation prior to costly model training and deployment. We revisited the widely used entropy-and-Fano paradigm and highlighted its practical challenges in recommendation, including the unavoidable dependence on how the candidate space is defined and the resulting instability under large candidate sets. To address these issues, we proposed an entropy-induced, training-free approach that directly maps an entropy estimate to an attainable-accuracy reference with a clear information-theoretic interpretation, without introducing an explicit candidate-set size parameter.

Extensive experiments on controlled synthetic generators and diverse real-world benchmarks demonstrate the effectiveness and robustness of the proposed approach. On synthetic data with oracle-controlled difficulty, the proposed estimate tracks the difficulty changes more faithfully than representative baselines. Under candidate-size sweeps with a fixed oracle ceiling, it remains stable while Fano-based estimates exhibit pronounced candidate-dependent drift. On real datasets, the resulting dataset-level ordering aligns strongly with the ordering of best-achieved offline accuracy across a broad suite of state-of-the-art sequential recommenders, and analyses by user groups further reveal systematic difficulty differences associated with novelty preference, long-tail exposure, and user activity. We also find that predictability can guide training data selection: training sets constructed from high-predictability users achieve strong downstream performance with less data. These findings suggest that entropy-induced limit characterization can serve as a practical and interpretable tool for understanding why accuracy gains saturate, comparing datasets beyond raw metrics, guiding where further modeling effort is most likely to be effective, and informing data-centric decisions. Future work includes tightening the theoretical characterization, extending the framework to broader evaluation protocols, and integrating the estimate into system-level workflows such as model selection, temporal difficulty monitoring, and data-centric optimization.

\bibliographystyle{ACM-Reference-Format}
\bibliography{references}

\end{document}

%% file: sections/01-introduction.tex
\section{Introduction}
Sequential recommendation models users' time-ordered interaction behaviors, and its canonical formulation is next-item recommendation given a user's historical interaction sequence. This problem arises broadly in e-commerce, content delivery, and short-video platforms, and has attracted sustained research interest~\cite{luo2024collaborative,yao2024recommendertransformers}. Despite continuing advances in model architectures and training paradigms~\cite{zhang2024scalinglaw,cui2024distillation,shi2024diversifying_transformers,zhang2024soft_contrastive_sr,hu2025horae}, two fundamental questions repeatedly arise in both research and practice: (i) whether the given data contain sufficient structural regularities to support accurate prediction, and (ii) under a fixed task specification and evaluation protocol, how far the best-achieved performance is from the attainable accuracy limit, thereby informing whether further modeling and engineering effort is likely to yield meaningful gains. These questions cannot be answered reliably using offline metrics alone, because such metrics summarize the outcome of a \emph{particular model} under a \emph{particular training and evaluation protocol}; their values are jointly influenced by model capacity, optimization choices, and data characteristics, and thus do not directly reveal the intrinsic difficulty or attainable level of the task~\cite{wang2024biasagnostic}. Moreover, attempting to answer these questions by repeatedly training, tuning, and deploying multiple models is costly and time-consuming, and the resulting conclusions are often contingent on the chosen model family and implementation details, lacking a stable point of reference.

To this end, it is necessary to introduce a model-agnostic quantitative concept that characterizes the intrinsic regularity and attainable limit of a task, enabling the assessment of prediction difficulty and potential improvement \emph{prior} to model training and deployment. This concept is \textbf{predictability}: the best achievable predictive performance attainable by any algorithm under the given data and task specification~\cite{song2010limits,jarv2019predictability_session_nextitem}. Quantifying predictability provides a unified and operational basis for judging whether the data are intrinsically predictable and whether there remains room for improvement, and it enables difficulty comparisons across datasets and user populations. More importantly, if predictability can be estimated efficiently at the user level, it may also serve as an upstream signal for data-centric decisions such as selecting more informative training data under limited resource budgets.
\begin{figure*}[!t]
	\centering
	\setlength{\abovecaptionskip}{0pt}
	\setlength{\belowcaptionskip}{0pt}
	\includegraphics[width=0.98\textwidth]{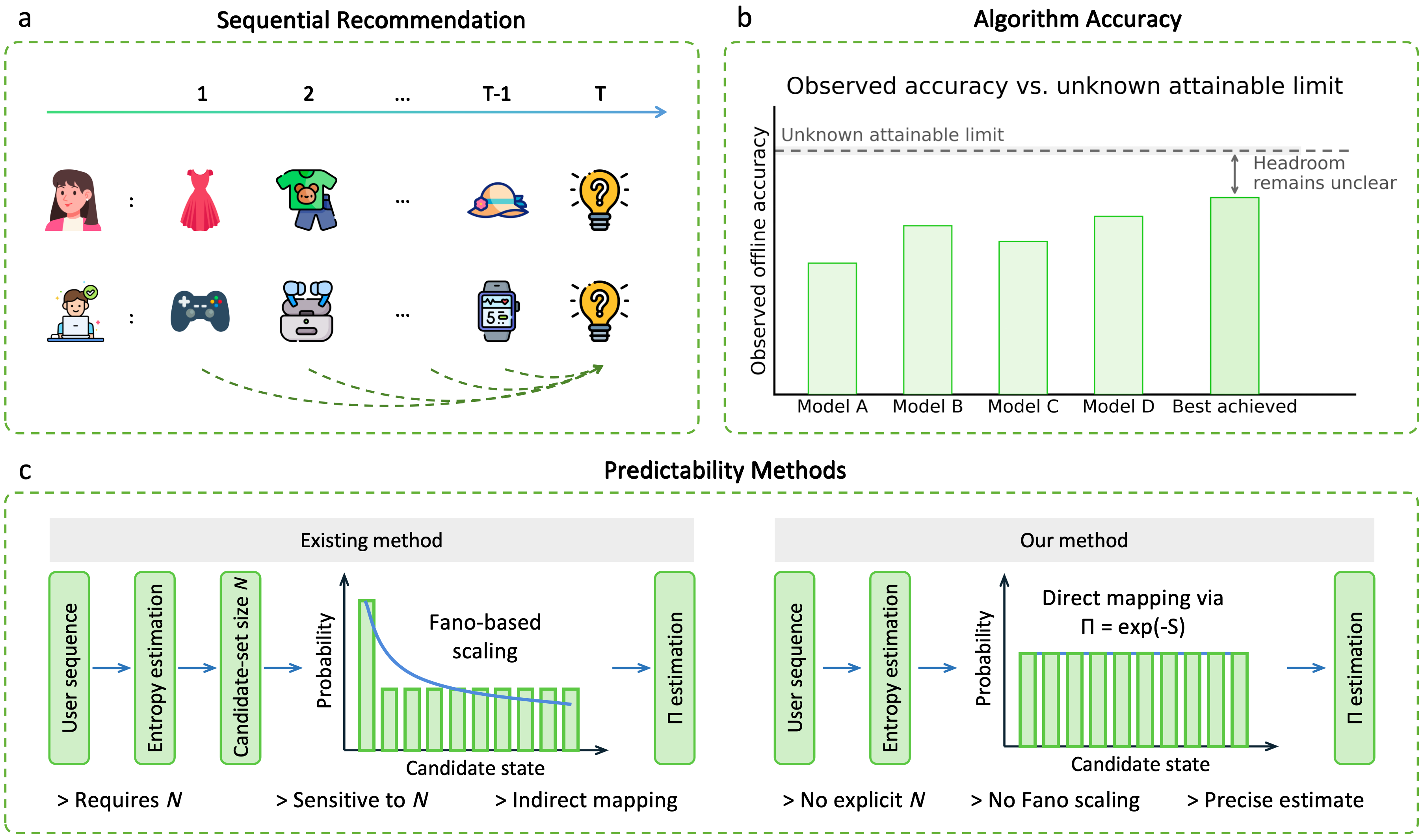} 
	\caption{Overview of accuracy-limit characterization in sequential recommendation. (a) Task illustration. (b) Best-achieved offline accuracy continues to improve, while a model-agnostic attainable reference is still lacking. (c) Entropy-based predictability estimation with Fano scaling versus our entropy-induced characterization without Fano scaling.}
	\Description{Overview of the accuracy-limit characterization problem in sequential recommendation: (a) task illustration, (b) improving best-achieved accuracy across models, and (c) a comparison between entropy-plus-Fano and the proposed entropy-induced characterization.}
	\label{fig:intro_overview}
\end{figure*}

Figure~\ref{fig:intro_overview} summarizes our setting and motivation. Fig.~\ref{fig:intro_overview}(a) illustrates the sequential recommendation task of predicting the next interaction conditioned on a user's historical sequence. Fig.~\ref{fig:intro_overview}(b) shows that the best-achieved offline accuracy of sequential recommenders has continued to improve, while a model-agnostic reference for the attainable accuracy limit remains unclear. To estimate such limits, a widely used line of predictability research follows an ``entropy + Fano scaling'' pipeline: it first estimates sequence entropy as a measure of uncertainty and then maps entropy to predictability by numerically inverting Fano's inequality~\cite{song2010limits,xu2019predictability,sun2020revealing,tang2020predictability}. Fig.~\ref{fig:intro_overview}(c) contrasts this paradigm with our entropy-induced characterization that avoids Fano scaling; as we discuss next, the reliance on Fano scaling introduces key limitations in sequential recommendation.

Despite its broad adoption, applying the ``entropy + Fano scaling'' paradigm to sequential recommendation faces two fundamental obstacles. First, the mapping requires specifying a candidate-set size \(N\), yet in recommendation \(N\) is protocol and system dependent (e.g., the full item universe, a retrieved candidate pool, or sampled evaluation candidates) and thus lacks a single operational definition. This issue is exacerbated by the fact that practical recommender systems typically involve very large item vocabularies: when \(N\) is large, Fano inversion can be strongly re-calibrated by \(N\) and may yield implausible estimates that are artificially pushed toward high predictability, undermining its reliability~\cite{smith2014refined,xu2023quantifying}. Second, Fano scaling is prone to excessive scaling in the low-predictability regime, leading to systematic overestimation~\cite{kulkarni2019examining}. Since sequential recommendation often operates in low-predictability settings, this bias further weakens the method's utility for difficulty assessment and limit characterization. Taken together, these limitations motivate an entropy--predictability transformation that avoids Fano scaling and does not depend on \(N\).

To this end, we propose a predictability characterization that does not rely on Fano scaling. Rather than adopting ``entropy + Fano scaling,'' we start from an information-theoretic interpretation and establish a derivable link between entropy and attainable accuracy. Specifically, we use the entropy of the next-step distribution as a measure of uncertainty and map it to an interpretable \emph{effective} candidate size. Under this view, two prediction problems with the same entropy can be interpreted as lossless identification tasks with the same average information requirement; the corresponding least favorable case is a uniform distribution over candidate states, where no predictor can exploit probability differences, and the attainable accuracy is determined by the reciprocal of the effective size. Building on this idea, we derive an entropy-induced lower bound on predictability and construct a training-free estimator that does not require an explicit candidate-set size parameter. When applied to sequential recommendation, the method only requires estimating entropy from user interaction sequences, enabling a fast characterization of task difficulty and the attainable level without depending on \(N\). The concrete sequence-to-predictability computation procedure is summarized in Algorithm~\ref{alg:entropy_predictability}. This formulation reduces predictability quantification to entropy estimation, thereby enabling model-agnostic limit comparisons and diagnostic analyses in sequential recommendation.

In sequential recommendation, the main practical appeal of our entropy-induced estimator lies in making limit characterization operational under large candidate spaces. Because it reduces predictability estimation to entropy computation on interaction sequences and avoids both Fano scaling and an explicit candidate-set size \(N\), it enables low-cost, model-agnostic assessment of intrinsic difficulty before training and deployment. This further supports diagnostic analyses at multiple granularities: one can stratify users by behavioral dimensions such as novelty preference, long-tail exposure, and activity, and quantify how intrinsic predictability varies across user groups; similarly, tracking predictability over time windows can reveal difficulty drift without repeatedly retraining models. Beyond diagnosis, predictability can also be used as a proactive signal for training data construction: under constrained budgets, prioritizing additional candidate users with high predictability can improve downstream recommendation performance more effectively than unguided selection. Finally, by comparing the dataset-level predictability ranking with the ranking of best-achieved offline accuracy, the estimator can highlight datasets or user groups where current methods appear to underperform relative to intrinsic regularities, providing a principled cue for prioritizing further modeling effort.

Our contributions can be summarized as follows:
\begin{itemize}
	\item We develop an entropy-induced theory for characterizing predictability and derive a lower bound on attainable accuracy without relying on Fano scaling, thereby avoiding the excessive-scaling bias in low-predictability regimes.
	\item We propose a training-free predictability estimator that avoids specifying the candidate-set size \(N\), preventing non-operational definitions and rescaling-induced misestimation in large candidate spaces.
	\item We validate the proposed method on theoretically controlled synthetic data and a diverse set of real-world sequential recommendation benchmarks, demonstrating its utility for stability, robustness to candidate-space scaling, and dataset-level and user-group diagnostic analyses.
	\item We further show that predictability can guide training data selection under fixed evaluation tasks and limited extra-data budgets, extending its role from post-hoc difficulty analysis to practical data construction.
\end{itemize}

%% file: sections/02-related-work.tex
\section{Related Work}
\subsection{Sequential Recommendation}
Sequential recommendation models users' time-ordered interaction histories and aims to predict the next preference conditioned on past behaviors, with next-item recommendation as a canonical task. This problem is prevalent in e-commerce, content delivery, and short-video systems, and serves as a key component connecting users' evolving interests with real-time decision making. Prior work has progressed along two intertwined directions. On the one hand, modeling paradigms for sequential dependence have continuously evolved to better capture short-term intent, long-term preference, and interest drift. On the other hand, task settings and learning paradigms have been expanded, including session-level versus long-term scenarios, multi-source contextual signals, and large-model-based training~\cite{kang2018sasrec,sun2019bert4rec,zhang2024scalinglaw,zou2024knowledge_conversational,sun2024crosscity_poi,wang2025llm4dsr,xin2025llmcdsr,lin2026coderplus,xu2021core_interest_network,xu2023within_basket_aux}. Recent surveys provide systematic overviews and discuss emerging topics such as cross-domain transfer, generative approaches, and large-model-driven recommendation~\cite{pan2026survey_sr,chen2024cdsr_survey}.

Evolution of Modeling Paradigms.
Early sequential recommendation approaches often relied on first-order or low-order Markov assumptions and modeled local dependencies via transition probabilities or similarities. While such formulations are structurally clear and interpretable, they are limited in capturing long-range dependencies and complex interest evolution. To jointly model long-term preference and short-term transitions, subsequent work combined transition structures with personalized preference modeling (e.g., matrix factorization), forming classical hybrid frameworks~\cite{rendle2010fpmc,he2016fossil}. In parallel, metric-space formulations modeled sequential preference as a continuous transition of users in a latent space, providing a compact representation of successive interactions~\cite{he2018transrec}.
With the adoption of deep learning in recommender systems, sequence modeling gradually shifted from explicit transition assumptions to representation learning. Recurrent neural networks (RNN/GRU/LSTM) encode interaction sequences into hidden states, representing time-evolving interests in a state space and strengthening sequential dependence modeling~\cite{hidasi2015gru4rec}. Attention and memory mechanisms were further introduced to improve the selection of informative history segments and aggregation of session intent~\cite{li2017narm,liu2018stamp}. In addition, convolutional and residual architectures have been used to efficiently extract local \(n\)-gram behavior patterns, improving parallelism and complementing sequential modeling~\cite{tang2018caser,yuan2019nextitnet}.

More recently, self-attention explicitly operationalizes the question of which historical interactions are used for a prediction by aggregating weighted representations over past positions. This mechanism is better suited to modeling long-range dependencies, interest drift, and mixtures of interests, and has become a dominant paradigm in Transformer-based sequential recommendation~\cite{kang2018sasrec,sun2019bert4rec,fan2021lightsans,shi2024diversifying_transformers,zhang2025ssd4rec}. Building on this paradigm, many studies have enhanced representation capacity and robustness through architectural and training advances, such as incorporating frequency-domain or filtering perspectives to complement high-frequency or periodic information~\cite{liu2023fearec}, and using contrastive learning and data augmentation to alleviate sparse supervision and representation degeneration~\cite{xie2022cl4srec,qiu2022duorec,zhang2024soft_contrastive_sr}. Furthermore, in session-based sequential recommendation, short-window interaction trajectories often exhibit backtracking, branching, and higher-order co-occurrences; modeling session sequences as directed weighted graphs and applying graph neural networks can capture nonlinear transition structures and higher-order dependencies~\cite{wu2019srgnn,xie2019gcsan,ma2019hgn}. Recent work has also combined graph structures with sequence encoding or contrastive learning to exploit both local sequential context and global structural signals~\cite{zhang2022gcl4sr,zhang2025positional_prompts_sr}.

Extensions in Task Settings, Granularity, and Learning Paradigms.
From a temporal granularity perspective, sequential recommendation is commonly distinguished into session-level and long-term scenarios. Session-level sequences are short and context-dependent, with frequent intent switching; modeling often emphasizes local transitions and immediate preferences. Long-term scenarios involve longer sequences and more pronounced interest drift, requiring mechanisms that capture long-range dependencies while balancing long-term preference and short-term intent. Correspondingly, architectures designed for session intent aggregation and attention-based selection have been developed~\cite{li2017narm,liu2018stamp}. In terms of learning paradigms, sequential recommendation has gradually moved from primarily supervised training to training regimes centered on self-supervision and pretraining, including masked modeling, order recovery, and contrastive learning. With increasing model scale and training data, scalable training and scaling behaviors have also drawn attention, along with strategies such as distillation and structured transfer for large-model-based recommenders~\cite{yao2024recommendertransformers,zhang2024scalinglaw,cui2024distillation}. These advances have contributed to continual improvements in offline metrics, while making the more fundamental question increasingly salient: under a given dataset and task specification, how far current performance is from the attainable limit.

Overall, existing work on sequential recommendation primarily focuses on improving empirical performance through stronger models and training strategies. In contrast, this paper studies model-agnostic characterization of attainable performance, with the goal of providing an interpretable reference for ``accuracy limits'' and task difficulty to support difficulty assessment and headroom analysis prior to model training and deployment.

\subsection{Time-Series Predictability}
Time-series predictability studies aim to answer a fundamental question: under a given dataset and prediction-task specification, what is the best predictive performance that any algorithm can achieve. A recent survey suggests that the literature has formed a relatively clear methodological landscape, in which information-theoretic approaches play a central role by quantifying uncertainty using entropy and mapping uncertainty to an upper bound on attainable performance or to a predictability estimate via inequalities~\cite{xu2026predictability_complex_systems}. This framework typically consists of two components: estimating sequence entropy and establishing the theoretical link that maps entropy to an attainable-performance boundary. Because the true entropy is often intractable to compute, various nonparametric estimators have been proposed, among which Lempel--Ziv-type estimators based on compression principles are representative~\cite{kontoyiannis1998nonparametric}. Building on such entropy estimates, Song \emph{et al.} used human mobility sequences as a canonical case and related entropy to the best attainable accuracy through Fano's inequality, thereby providing an upper-bound estimate of predictability~\cite{song2010limits}. This paradigm has since been widely reused and extended, leading to applications and variants across time-series and complex-system settings~\cite{xu2019predictability,tang2020predictability,sun2020revealing}.

Within this general framework, prior studies have mainly advanced applicability and tightness along two directions. The first line of work focuses on the candidate-space specification in the ``entropy--Fano'' mapping. In its standard form, the Fano framework requires an explicit candidate-set size \(N\), which directly affects numerical tightness and cross-scenario comparability. To mitigate this dependency, existing studies have proposed tightening \(N\) using the number of historical states, reachability constraints, or topological constraints, thereby obtaining tighter upper-bound estimates~\cite{smith2014refined}. Meanwhile, other work has systematically analyzed sources of bias in this paradigm on real data and reported that it may exhibit pronounced overestimation in low-predictability regimes, undermining its use for difficulty assessment~\cite{kulkarni2019examining}. Together, these results indicate that the reliability of ``entropy--Fano'' methods depends not only on entropy-estimation accuracy but also on the candidate-space specification and the looseness introduced by the inequality mapping~\cite{xu2026predictability_complex_systems}.

The second line of work emphasizes \emph{conditional structure}: by incorporating contextual or multi-source information, uncertainty characterization is extended from a single sequence to conditional settings. Beyond Shannon entropy alone, related studies use quantities such as mutual information and conditional entropy to quantify how external information reduces sequence uncertainty, and to analyze the sources and mechanisms of predictability~\cite{teixeira2019deciphering,teixeira2021impact,zhang2022beyond}. In these approaches, conditional entropy characterizes the residual uncertainty given context, allowing predictability analyses to better reflect ``information availability'' in practical systems and improving interpretability~\cite{xu2026predictability_complex_systems}.

In addition to information-theoretic upper-bound approaches, surveys have also summarized complementary complexity measures to characterize regularities in numerical sequences, short sequences, or high-noise settings~\cite{xu2026predictability_complex_systems}. For example, approximate entropy and sample entropy have been used as measures of sequence complexity~\cite{pincus1991approximate,richman2000physiological}, and permutation entropy captures local dynamical structure via ordinal patterns with favorable robustness and computability~\cite{bandt2002permutation}. Moreover, theoretical discussions on the relationship between predictability and optimal error provide an additional perspective for understanding attainable boundaries~\cite{xu2023equivalence}. Building on these lines of work, our study focuses on a direct connection between entropy and predictability to support subsequent method comparisons and analyses in recommendation settings.

Predictability limits in recommender systems have also begun to receive attention. For example, recent work has studied rating prediction and systematically discussed attainable performance bounds and their relationship to data uncertainty~\cite{xu2025upper_bound_rating_prediction}. Compared with rating prediction, however, sequential recommendation typically operates with a much larger candidate space, and the definition of candidate-set size \(N\) is highly protocol dependent, such as full candidates, visible candidates, retrieved candidates, or sampled candidates used during evaluation. This makes Fano-based upper-bound estimation face non-operational choices and potential distortion in recommendation tasks, and has motivated additional constraints or candidate-space tightening strategies tailored to recommendation~\cite{xu2023quantifying}. Therefore, while entropy provides a general tool for characterizing sequence uncertainty, recommendation settings still call for a more operational entropy--predictability transformation that can be used for limit characterization and headroom analysis without depending on a candidate-space size parameter.

%% file: sections/03-preliminaries.tex
\section{Preliminaries}

\subsection{Task Setup and Notation}
We study the next-item prediction task in sequential recommendation. Let \(\mathcal{U}\) denote the set of users and \(\mathcal{I}\) the set of items. The observed data consist of user interaction logs. For each user \(u\in\mathcal{U}\), the time-ordered interaction sequence is denoted as
\begin{equation}
	\mathbf{x}^{(u)} = \left(x^{(u)}_1, x^{(u)}_2, \dots, x^{(u)}_{T_u}\right),\quad x^{(u)}_t \in \mathcal{I},
\end{equation}
where \(T_u\) is the sequence length of user \(u\). For simplicity, we focus on the item sequence in what follows.

Given the history \(h^{(u)}_t\triangleq (x^{(u)}_1,\dots,x^{(u)}_t)\), the goal at time \(t\) is to predict the next interaction \(x^{(u)}_{t+1}\). From a probabilistic perspective, this corresponds to modeling the conditional distribution \(P(X_{t+1}\mid h_t)\). We report empirical results using standard offline ranking metrics in the experimental section.

\subsection{Definition of Predictability}
Predictability characterizes the optimal level that a prediction task can achieve under a given data distribution and task specification. For next-item prediction in sequential recommendation, we view a prediction algorithm as a mapping from the history \(h_t\) to an output. Let \(\mathcal{A}\) denote the set of all candidate algorithms (or prediction rules). For any \(\mathnormal{a}\in\mathcal{A}\), at time \(t\) it outputs a prediction \(\widehat{x}_{t+1}=\mathnormal{a}(h_t)\).

Under a fixed evaluation criterion (e.g., defining a successful prediction as correctly matching the true next item), the expected accuracy of algorithm \(\mathnormal{a}\) is
\begin{equation}
	Accuracy_{\mathnormal{a}} \triangleq \mathbb{E}\!\left[\mathbf{1}\!\left(\mathnormal{a}(h_t)=x_{t+1}\right)\right],
\end{equation}
where the expectation is taken with respect to the data-generating process, i.e., the joint distribution of histories and next items. Accordingly, predictability \(\Pi\) is defined as the optimal accuracy achievable by any algorithm under this task setting:
\begin{equation}
	\Pi \triangleq \sup_{\mathnormal{a}\in\mathcal{A}} Accuracy_{\mathnormal{a}}.
\end{equation}
This definition emphasizes that predictability is a model-agnostic quantity jointly determined by the data distribution and the task specification, and it is intended to characterize the attainable predictive level of the task itself.

\subsection{Entropy Estimation and Fano Mapping}
From an information-theoretic perspective, entropy is a fundamental measure of uncertainty in a stochastic process. In a sequential prediction task, the remaining uncertainty after observing the history \(h_t\) can be characterized by the entropy of the conditional distribution \(P(X_{t+1}\mid h_t)\). In practice, one often considers the average uncertainty across different histories to obtain an overall characterization of regularity in the sequence.

Since the true generating distribution is typically unknown, entropy must be estimated from observed sequences. Common approaches in the predictability literature include nonparametric estimators based on compression principles, among which Lempel--Ziv-type methods are widely used for sequence entropy estimation~\cite{kontoyiannis1998nonparametric}. In our experiments, we adopt this class of nonparametric estimators and denote the resulting estimate of sequence uncertainty by \(\widehat{S}\).

Given an entropy estimate, the classical predictability-quantification route further uses Fano's inequality to establish a mapping from entropy to the optimal attainable accuracy~\cite{song2010limits}. To illustrate its core logic, consider the next-step distribution \(P(X_{t+1}\mid h_t)\) conditioned on a given history \(h_t\). Song \emph{et al.}~\cite{song2010limits} construct an auxiliary distribution \(\hat{P}(X_{t+1}\mid h_t)\) based on the true distribution: among \(N\) candidate states, they keep the probability of the most likely state \(x_{MS}\), denoted \(p_1\triangleq \max_x P(x\mid h_t)\), and rescale the remaining \(N-1\) probabilities to be equal, i.e.,
\begin{equation}
	\hat{P}(X_{t+1}\mid h_t)=\left(p_1,\frac{1-p_1}{N-1},\dots,\frac{1-p_1}{N-1}\right).
\end{equation}
By the maximum-entropy principle, under fixed \(p_1\) and candidate size \(N\), distributing the remaining probability mass uniformly maximizes entropy, and thus
\begin{equation}
	S(X_{t+1}\mid h_t)\le S(\hat{X}_{t+1}\mid h_t).
\end{equation}
Fano's inequality then relates the error probability to conditional entropy. Let \(\widehat{X}_{t+1}=\mathnormal{a}(h_t)\) be the output of an arbitrary prediction rule, and define the error event \(\widehat{X}_{t+1}\neq X_{t+1}\) with probability \(P_e\triangleq \Pr(\widehat{X}_{t+1}\neq X_{t+1})\). For a discrete variable with \(N\) candidate states, Fano's inequality gives
\begin{equation}
	S(X_{t+1}\mid \widehat{X}_{t+1}) \le h_2(P_e) + P_e \log_2 (N-1),
	\label{eqn:fano}
\end{equation}
where \(h_2(\cdot)\) is the binary entropy function. Since \(\widehat{X}_{t+1}=\mathnormal{a}(h_t)\) is a deterministic function of \(h_t\), \(\widehat{X}_{t+1}\) introduces no additional randomness given \(h_t\). Therefore, \(h_t\) contains all information in \(\widehat{X}_{t+1}\) (and \(\widehat{X}_{t+1}\) is a compression of \(h_t\)), and the conditional entropy satisfies
\begin{equation}
	S(X_{t+1}\mid h_t)\le S(X_{t+1}\mid \widehat{X}_{t+1}).
\end{equation}
Defining predictability as the optimal success probability \(\Pi=1-P_e\) and rewriting the inequality in terms of \(\Pi\) yields
\begin{equation}
	S(X_{t+1}\mid h_t) \le -\Pi\log_2\Pi - (1-\Pi)\log_2(1-\Pi) + (1-\Pi)\log_2(N-1).
	\label{eqn:fano_pi}
\end{equation}
Song \emph{et al.} further upper bound \(S(X_{t+1}\mid h_t)\) using the auxiliary distribution \(\hat{P}\) and align this upper bound with the right-hand side of Eq.~(\ref{eqn:fano_pi}), leading to a numerical mapping from entropy to predictability. Given a candidate-set size \(N\)~\cite{song2010limits,smith2014refined,xu2023quantifying}, the entropy value \(S(P)\) can be mapped to predictability \(\Pi\) by numerically solving
\begin{align}
	S(P) = S_F(\Pi) = -\Pi\log_2\Pi - (1-\Pi)\log_2(1-\Pi) + (1-\Pi)\log_2(N-1).
	\label{eqn:song}
\end{align}
Although this framework has been widely used across multiple types of sequential data, it exhibits key limitations in settings with large candidate spaces such as sequential recommendation, which we discuss below under our task setting.

\begin{figure}[t]
	\centering
	\includegraphics[width=0.65\linewidth]{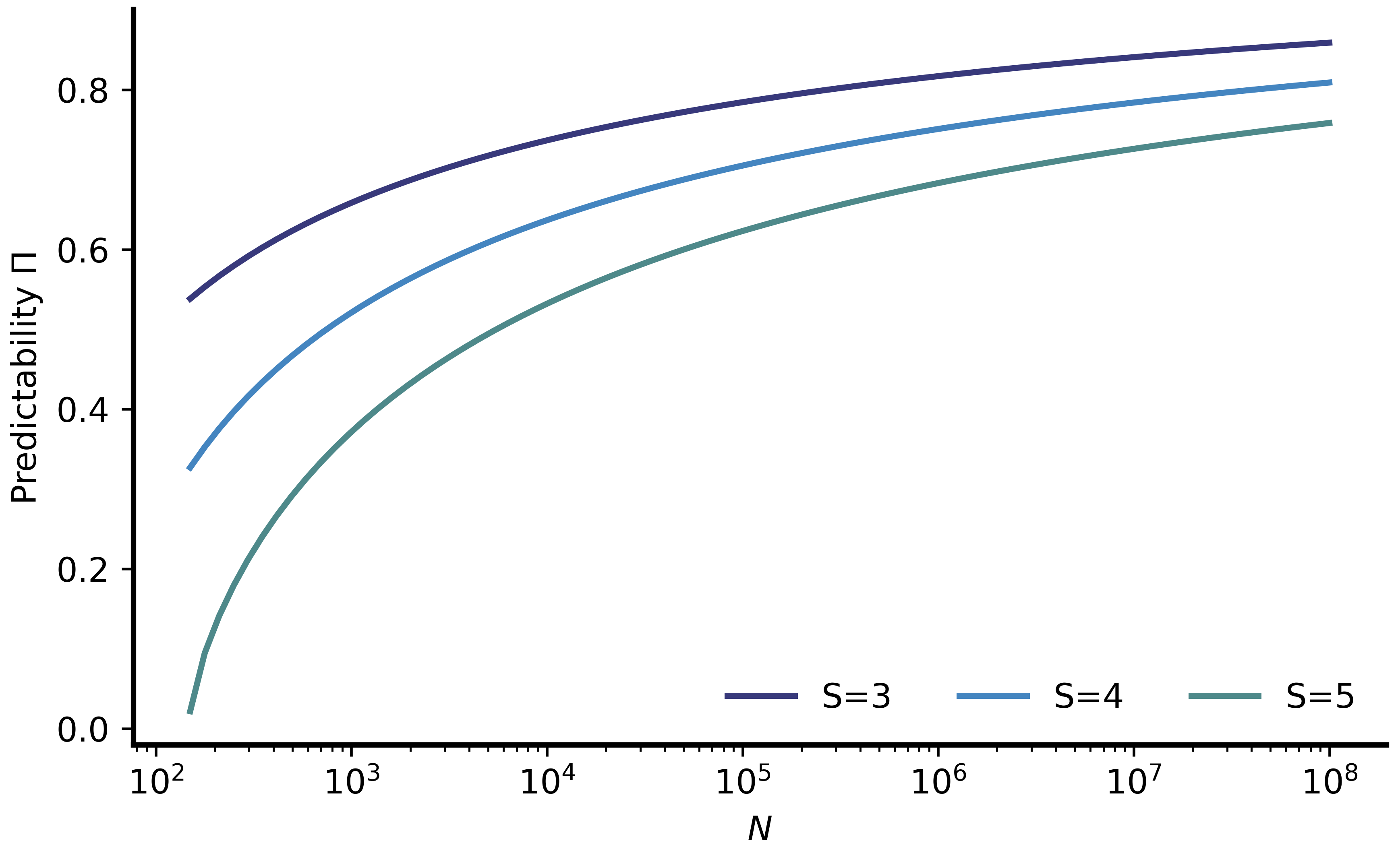}
	\caption{Effect of \(N\) on predictability estimates under the classical Fano mapping. For a fixed entropy level \(S\), the solution \(\Pi\) to Eq.~(\ref{eqn:song}) increases with \(N\), indicating that the choice of \(N\) can substantially change the resulting estimate.}
	\Description{A line chart showing predictability Pi increasing with candidate set size N (log scale) for three fixed entropy values S=3, S=4, and S=5.}
	\label{fig:n_vs_pi_fano}
\end{figure}

\subsection{Limitations of Existing Work}
\noindent Although the ``entropy estimation + Fano mapping'' paradigm has been widely adopted for many types of sequential data, its estimates can become systematically distorted in sequential recommendation, where the candidate space is large. Under our task setting, the limitations mainly manifest in the following two aspects.

\subsubsection{Limitation 1: Excessive Scaling}
To convert an ``entropy constraint'' into a ``constraint on prediction error,'' the classical route typically introduces two upper-bounding steps. First, it upper bounds the true conditional entropy by the entropy of an auxiliary distribution \(\hat{P}\) that is maximal under a fixed maximum probability. Second, it applies Fano's inequality to further translate an entropy upper bound into a constraint on the achievable success rate. The accumulation of these upper bounds makes the numerical mapping loose. In particular, in the low-predictability regime, the resulting estimates are more likely to be biased upward~\cite{kulkarni2019examining}. Consequently, when a sequence exhibits weak regularities and the attainable success rate should be low, ``entropy estimation + Fano mapping'' can yield an overly optimistic predictability level, reducing its discriminative power as a tool for quantifying task difficulty.

\subsubsection{Limitation 2: Dependence on Candidate-Set Size}
Eq.~(\ref{eqn:song}) shows that the classical mapping depends explicitly on the candidate-set size \(N\). In sequential recommendation, however, the ``candidate set'' is not a task-independent constant. On the one hand, the global item vocabulary can be extremely large. On the other hand, offline evaluation often introduces protocol-dependent constraints, such as sampled candidates, filtering rules, or a retrieval stage, leading to substantial variation in the effective candidate space across different settings. Because the term \((1-\Pi)\log_2(N-1)\) is highly sensitive to \(N\), ambiguity in how \(N\) is quantified can be amplified into the predictability estimate. When \(N\) is large, the mapping can push the estimate toward values close to \(1\), which is not aligned with the intended semantics that predictability is a model-agnostic limit determined by the regularity of the data. Under evaluation protocols with sampled candidates, the manual choice of \(N\) further makes estimates difficult to compare across datasets or experimental settings. Fig.~\ref{fig:n_vs_pi_fano} illustrates this issue by showing that, for a fixed entropy value \(S\), the predictability \(\Pi\) implied by Eq.~(\ref{eqn:song}) increases monotonically with \(N\), highlighting the potential distortion in large-scale item spaces.

Taken together, these limitations suggest that, in sequential recommendation, the classical ``entropy + Fano'' paradigm may not provide stable and interpretable predictability estimates. This motivates the entropy--predictability quantification method in our methodology section, which avoids Fano scaling and does not require an explicit candidate-set size \(N\).

%% file: sections/04-methodology.tex
\section{Methodology}
\subsection{A Direct Entropy-to-Predictability Mapping}\label{section2}
Let \(P(X_n\mid h_{n-1})\) denote the conditional distribution of the next state given the history \(h_{n-1}\). Its entropy is defined as
\begin{equation}
	S\!\left(P(\cdot\mid h_{n-1})\right)\triangleq -\sum_{x} P(x\mid h_{n-1})\log P(x\mid h_{n-1}).
\end{equation}
Throughout this paper, \(\log\) denotes the natural logarithm; if entropy is measured in bits with base \(2\), \(\exp(\cdot)\) should be replaced by \(2^{(\cdot)}\) accordingly.
For a discrete variable supported on \(N\) states, \(S\!\left(P(\cdot\mid h_{n-1})\right)\le \log N\), with equality if and only if the distribution is uniform. Motivated by this fact, we define
\begin{equation}
	M(h_{n-1}) \triangleq \exp\!\big(S(P(\cdot\mid h_{n-1}))\big)
\end{equation}
and interpret it as an entropy-induced \emph{effective uncertainty}: under a given entropy value, \(M(h_{n-1})\) corresponds to the ``effective candidate size'' of a uniform distribution with the same entropy. This quantity satisfies \(1\le M(h_{n-1})\le N\) and is completely determined by entropy, thus requiring no explicit specification of the candidate-set size \(N\).

To map entropy to a predictability characterization, we introduce a standard reference. Let \(U_m\) be the uniform distribution over \(m\) candidate states. Then \(S(U_m)=\log m\), and the optimal prediction accuracy under this reference is \(1/m\). Accordingly, for a general conditional distribution \(P(\cdot\mid h_{n-1})\), we associate its uncertainty with the effective size \(m=M(h_{n-1})=\exp(S(P(\cdot\mid h_{n-1})))\), and define the entropy-induced predictability measure
\begin{equation}
	\Pi_S(h_{n-1})\triangleq \frac{1}{M(h_{n-1})}=\exp\!\big(-S(P(\cdot\mid h_{n-1}))\big).
	\label{eq:pi_s_def}
\end{equation}
This measure depends only on entropy and can therefore quantify the intrinsic difficulty of sequential prediction without training a recommendation model. The next subsection presents a formal theoretical result relating \(\Pi_S\) to the predictability of the original task, together with its information-theoretic interpretation and its properties regarding candidate-size robustness and computational efficiency.

\subsection{Entropy-Induced Lower Bound and Information-Theoretic Interpretation}
This subsection presents our core theoretical result: the entropy of the next-step distribution conditioned on a given history directly implies a rigorous lower bound on the predictability of the original task. We then provide an operational interpretation based on lossless coding to clarify the meaning of \(M(h_t)=\exp(S)\): it corresponds to the size of a uniform candidate set (perplexity) that carries the same amount of information as the entropy, and \(\Pi_S(h_t)=1/M(h_t)\) is the corresponding reference-level hit rate under this uniform benchmark.

Given a history \(h_t\), consider the conditional distribution of the next state \(P(X_{t+1}\mid h_t)\). Under the criterion that a prediction is correct if it hits the true next state, predictability is defined as
\begin{equation}
	\Pi^{\ast}(h_t)\triangleq \sup_{\mathnormal{a}} \Pr\!\left(\mathnormal{a}(h_t)=X_{t+1}\mid h_t\right)=\max_x P(x\mid h_t).
	\label{eq:pi_star_def}
\end{equation}
The entropy of the next-step distribution conditioned on \(h_t\) is \(S(P(\cdot\mid h_t))\triangleq -\sum_x P(x\mid h_t)\log P(x\mid h_t)\), and the entropy-induced measure \(\Pi_S(h_t)\) is given in Eq.~(\ref{eq:pi_s_def}).

\begin{theorem}[Entropy-induced lower bound]\label{thm:entropy_lower_bound}
For any history \(h_t\), the predictability of the original task satisfies
\begin{equation}
	\Pi^{\ast}(h_t)\ \ge\ \Pi_S(h_t).
	\label{eq:pi_lower_bound}
\end{equation}
\end{theorem}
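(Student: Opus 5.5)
We need to prove \(\max_x p(x) \ge \exp(-S(P))\), where \(S(P) = -\sum_x p(x)\log p(x)\) and \(p\) abbreviates \(P(\cdot\mid h_t)\). The standard route is to compare Shannon entropy with min-entropy.

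The proof works history by history, so fix \(h_t\) and write \(p(x)\triangleq P(x\mid h_t)\). Let \(p_{\max}\triangleq\max_x p(x)=\Pi^{\ast}(h_t)\), using the characterization in Eq.~(\ref{eq:pi_star_def}). The sup over prediction rules is attained by the Bayes (argmax) rule, so nothing about algorithms is needed beyond that identity. Taking logarithms, the claim \(\Pi^{\ast}(h_t)\ge \Pi_S(h_t)=\exp(-S(P(\cdot\mid h_t)))\) is equivalent to
\begin{equation*}
	-\log p_{\max}\ \le\ -\sum_x p(x)\log p(x),
\end{equation*}
that is, min-entropy is at most Shannon entropy.

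For every \(x\) in the support we have \(p(x)\le p_{\max}\), hence \(-\log p(x)\ge -\log p_{\max}\). Averaging this pointwise inequality with weights \(p(x)\) over the support, whose total mass is \(1\), gives
\begin{equation*}
	S(P(\cdot\mid h_t))=\sum_x p(x)\bigl(-\log p(x)\bigr)\ \ge\ -\log p_{\max}.
\end{equation*}
Exponentiating the negated inequality, which preserves order because \(\exp\) is monotone, yields \(p_{\max}\ge \exp(-S)\), which is Eq.~(\ref{eq:pi_lower_bound}).

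An equivalent viewpoint is Jensen's inequality applied to the concave \(\log\): it gives \(\log\sum_x p(x)^2\ge \sum_x p(x)\log p(x)=-S\). Combining this with \(\sum_x p(x)^2\le p_{\max}\) recovers the same bound. I would mention this to tie \(\exp(-S)\) to the collision probability and the perplexity interpretation.

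There is no real obstacle; the only care points are technical.
\begin{itemize}
	\item Zero-probability states must be excluded, using the convention \(0\log 0=0\).
	\item If the state space is countably infinite, the sum defining \(S\) may diverge. In that case \(\exp(-S)=0\) and the bound holds trivially.
	\item The base of the logarithm must be consistent, replacing \(\exp\) by \(2^{(\cdot)}\) as the paper notes.
\end{itemize}
Equality holds exactly when \(p\) is uniform on its support, since the pointwise inequality must then be tight everywhere. This matches the uniform reference \(U_m\) used to motivate \(M(h_t)\), and I would remark on it to show the bound is sharp.
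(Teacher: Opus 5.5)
Your proof is correct and follows the paper's own argument: bound $-\log P(x\mid h_t)$ pointwise below by $-\log p_{\max}(h_t)$, average under $P(\cdot\mid h_t)$ to get $S(P(\cdot\mid h_t))\ge -\log p_{\max}(h_t)$, then exponentiate. The Jensen/collision-probability remark, the technical caveats, and the equality case (uniform on the support) are all correct, though none of them is needed for the bound itself.
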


\begin{proof}
Let \(p_{\max}(h_t)\triangleq \max_x P(x\mid h_t)=\Pi^{\ast}(h_t)\). For any \(x\), we have \(P(x\mid h_t)\le p_{\max}(h_t)\), and hence \(-\log P(x\mid h_t)\ge -\log p_{\max}(h_t)\). Taking the expectation with respect to \(P(x\mid h_t)\) yields
\begin{align}
	S(P(\cdot\mid h_t))
	&= -\sum_x P(x\mid h_t)\log P(x\mid h_t)\nonumber\\
	&\ge -\sum_x P(x\mid h_t)\log p_{\max}(h_t)\nonumber\\
	&= -\log p_{\max}(h_t),
	\label{eq:entropy_min_entropy}
\end{align}
Exponentiating both sides of Eq.~(\ref{eq:entropy_min_entropy}) and substituting \(p_{\max}(h_t)=\Pi^{\ast}(h_t)\) gives
\begin{equation}
	\Pi^{\ast}(h_t)=p_{\max}(h_t)\ge \exp\!\big(-S(P(\cdot\mid h_t))\big)=\Pi_S(h_t).
\end{equation}
\end{proof}

\begin{figure*}[!t]
	\centering
	\setlength{\abovecaptionskip}{0pt}
	\setlength{\belowcaptionskip}{0pt}
	\includegraphics[width=\textwidth]{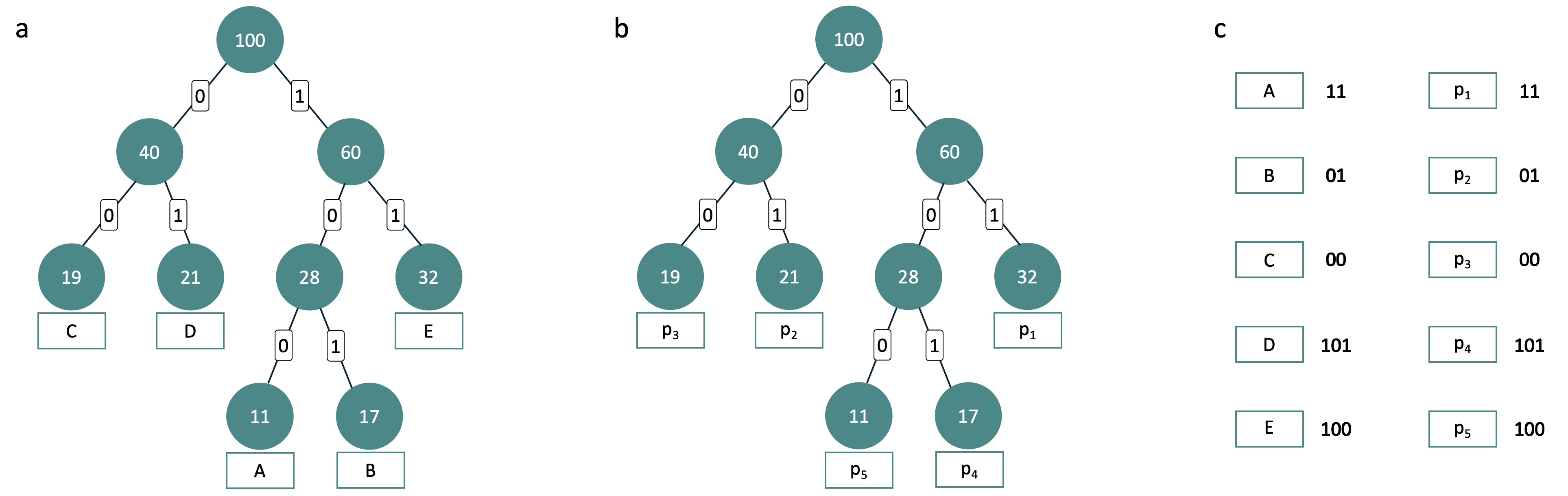}
	\caption{Entropy-induced predictability under a lossless Huffman-coding perspective. (a) Given a history \(h_t\), Huffman coding assigns codewords to next-state symbols according to the conditional distribution \(P(X_{t+1}\mid h_t)\). (b) An equivalent view: the probability mass function \(\{p(x\mid h_t)\}\) determines symbol code lengths \(\ell(x)\), so that ``locating the next state on average'' is operationally equivalent to ``the expected code length.'' Since the optimal expected code length matches entropy up to an additive constant, this perspective motivates the effective size \(M(h_t)=\exp(S)\) and the reference-level hit rate \(\Pi_S(h_t)=1/M(h_t)\).}
	\Description{Entropy-induced predictability under a lossless Huffman-coding view: (a) Huffman coding assigns variable-length codes to next-state symbols according to the conditional distribution; (b) equivalently, the probability mass function induces code lengths whose optimal expected length matches entropy up to a constant, motivating M=exp(S) and Pi_S=1/M.}
	\label{fig:entropy_mapping_coding}
\end{figure*}

\begin{remark}[Operational meaning of \(M(h_t)\) via lossless coding]\label{rem:coding_operational_meaning}
As illustrated in Fig.~\ref{fig:entropy_mapping_coding}, after observing a history \(h_t\), \textit{the uncertainty of the next state \(X_{t+1}\) can be understood not only through entropy but also from the perspective of the cost of lossless identification.} Consider any binary prefix-free code \(c:\mathcal{X}\rightarrow\{0,1\}^{\ast}\), with code length \(\ell(x)\triangleq |c(x)|\). Because such codes admit instantaneous unique decoding, ``losslessly identifying \(X_{t+1}\)'' is operationally equivalent to ``losslessly transmitting its codeword.'' Accordingly, the average identification cost is naturally measured by the expected code length
\begin{equation}
	\bar{L}(h_t)\triangleq \mathbb{E}[\ell(X_{t+1})\mid h_t]=\sum_x p(x\mid h_t)\,\ell(x)
\end{equation}
The theory of lossless prefix coding further implies that the conditional entropy \(H_2(X_{t+1}\mid h_t)\) (with \(\log_2\)) lower bounds the expected code length of any prefix-free code, and that Huffman coding~\cite{huffman1952method} achieves this bound with redundancy strictly less than \(1\) bit:
\begin{equation}
	H_2(X_{t+1}\mid h_t)\ \le\ \bar{L}_{\text{Huff}}(h_t)\ <\ H_2(X_{t+1}\mid h_t)+1.
\end{equation}
Therefore, entropy can be interpreted as the minimum average information required to losslessly identify the next state, up to a fixed additive constant. When using natural logarithms (nats), \(S(P(\cdot\mid h_t))=(\ln 2)\,H_2(X_{t+1}\mid h_t)\), and defining
\begin{equation}
	M(h_t)\triangleq \exp\!\big(S(P(\cdot\mid h_t))\big)
\end{equation}
converts this information quantity into an equivalent size. Specifically, \(M(h_t)\) corresponds to the size of a uniform candidate set (perplexity) with the same information content as the entropy. Under this uniform reference, the hit rate is \(1/M(h_t)\), which is consistent with \(\Pi_S(h_t)=1/M(h_t)\) in Eq.~(\ref{eq:pi_s_def}).

\end{remark}

\subsection{Implementation of the Estimator}
The previous two subsections establish the entropy-induced predictability lower bound \(\Pi_S\) and its theoretical properties at the distribution level (see Eq.~(\ref{eq:pi_s_def}) and Theorem~\ref{thm:entropy_lower_bound}). In practice, however, we only observe finite-length discrete interaction sequences and cannot directly access the true conditional distribution or its entropy. We therefore follow an ``estimate entropy and apply the mapping'' procedure: we first compute an entropy estimate \(\hat{S}\) from the observed sequence, and then obtain the EPL predictability estimate via the closed-form mapping \(\Pi_{\mathrm{EPL}}\triangleq \exp(-\hat{S})\). Below we use sample entropy as an illustrative entropy estimator; in experiments, other nonparametric entropy estimators can be substituted without changing the overall framework. For a direct overview of the complete computational pipeline, see \textbf{Algorithm~\ref{alg:entropy_predictability}}.

Given a discrete sequence \(\mathbf{x}=(x_1,x_2,\dots,x_T)\), we first use sample entropy (SampEn) to estimate its uncertainty level~\cite{richman2000physiological}. Intuitively, SampEn quantifies how the probability of pattern matches decays when the pattern length increases from \(m\) to \(m+1\): more regular sequences exhibit more persistent patterns and therefore lower SampEn, whereas less regular sequences yield higher SampEn. In implementation, we treat \(\textsc{SampEn}(\mathbf{x};m,r)\) as a black-box estimator that takes the sequence and hyperparameters \(m\) (embedding dimension) and \(r\) (tolerance threshold) as input, and outputs an entropy estimate \(\hat{S}\) measured in nats.

We then apply the entropy--predictability mapping to obtain the EPL predictability estimate:
\begin{equation}
	\Pi_{\mathrm{EPL}}\triangleq \exp(-\hat{S}).
	\label{eq:pi_hat_alg1}
\end{equation}
Algorithm~\ref{alg:entropy_predictability} summarizes the full procedure from sequence input to predictability output.

\begin{algorithm}[t]
\caption{EPL: Entropy-based Predictability Lower Bound Estimator}
\label{alg:entropy_predictability}
\begin{algorithmic}[1]
\Require A sequence $\mathbf{x}=(x_1,x_2,\dots,x_T)$, $x_t\in\mathcal{I}$;
sample entropy parameters $m$ and $r$
\Ensure Predictability estimate $\Pi_{\mathrm{EPL}}\in(0,1]$
\State $\hat{S}\gets \textsc{SampEn}(\mathbf{x};m,r)$
\State $\Pi_{\mathrm{EPL}}\gets \exp(-\hat{S})$
\State \Return $\Pi_{\mathrm{EPL}}$
\end{algorithmic}
\end{algorithm}

By Theorem~\ref{thm:entropy_lower_bound}, \(\Pi_S(h_t)=\exp(-S(P(\cdot\mid h_t)))\) is a per-history strict lower bound of the original-task predictability \(\Pi^{\ast}(h_t)\). This bound is determined solely by the data-generating distribution and is independent of any particular learning algorithm, thereby enabling a consistent characterization of the intrinsic difficulty of sequential prediction without training a model.
Moreover, this lower bound depends only on entropy and does not require defining or estimating an impractical candidate-set size \(N\) in sequential recommendation. This property avoids the high sensitivity of classical mappings to \(N\) and mitigates distortions induced by differences in evaluation protocols (e.g., candidate sampling and filtering rules), making \(\Pi_S\) more suitable as a unified reference across datasets and protocols.
From a computational perspective, the cost of \(\Pi_S\) is dominated by entropy estimation; once an entropy estimate is available, the predictability lower bound follows immediately from the closed-form mapping. This leads to good scalability on large collections of user sequences and supports comparisons across user groups or time windows for predictability analysis and drift tracking.
Finally, compared with the classical route that maps entropy to an upper-bound-type estimate via inequality-based scaling, \(\Pi_S\) provides an \(N\)-free lower-bound reference. The two perspectives are complementary in interpretation: the former aims to obtain an upper-bound characterization through a scaling chain, whereas the latter provides a baseline level directly induced by entropy. In sequential recommendation, the parameter-free form of \(\Pi_S\) makes it particularly convenient for comparing results across different datasets and evaluation protocols.

%% file: sections/05-experiments.tex
\section{Experiments}

\subsection{Datasets}\label{sec:exp_datasets}
We conduct empirical evaluations on a set of publicly available benchmark datasets that are widely used in recommender-system research. These datasets span diverse application domains, including movie ratings, music consumption, social bookmarking, e-commerce transactions, and educational interactions. To keep the task setting consistent across datasets, our real-data experiments only use the interaction triples \((\text{user},\text{item},\text{timestamp})\) to form time-ordered sequences.

\begin{itemize}
		\item \textit{AOTM}. A playlist/playlist-sequence dataset with explicitly ordered interactions, commonly used for benchmarking session-based and sequential recommendation~\cite{mcfee2012hypergraph}.
		\item \textit{Delicious}. User interaction logs from a social-bookmarking platform; we use the public HetRec 2011 release~\cite{hetrec2011}.
		\item \textit{Online Retail}. Online Retail II contains real-world online retail transaction logs with timestamped purchase events and product information, and has been used for sequential prediction and behavioral analysis in e-commerce~\cite{chen2012onlineretail2}.
		\item \textit{Personality}. A dataset released by GroupLens; in this paper, we use only its timestamped interaction sequences~\cite{nguyen2017personality}.
		\item \textit{LastFM}. Listening logs from a music service that naturally form time-ordered implicit-feedback sequences; we use the public HetRec 2011 release~\cite{hetrec2011}.
		\item \textit{TaFeng}. A grocery transaction dataset that records customers' purchase events over time, widely used for modeling shopping sequences and next-purchase prediction~\cite{rendle2010fpmc}.
		\item \textit{MovieLens-100K}. A classic movie-rating benchmark with explicit ratings and timestamps, extensively used for benchmarking collaborative filtering and recommendation models~\cite{harper2015movielens}.
		\item \textit{MovieLens-1M}. A larger MovieLens variant with approximately one million ratings, enabling evaluation under higher data volume and increased sparsity relative to MovieLens-100K~\cite{harper2015movielens}.
		\item \textit{MovieLens-20M}. A large-scale MovieLens variant with tens of millions of ratings, commonly used for large-scale collaborative filtering and temporal recommendation research~\cite{harper2015movielens}.
		\item \textit{Algebra}. An educational-interaction dataset from the KDD Cup 2010 educational recommendation task, containing longitudinal interactions between learners and learning items~\cite{stamper2016kddcup2010}.
		\item \textit{Bridge}. Another dataset from the KDD Cup 2010 educational recommendation task, with interaction records analogous to Algebra but from a different setting and content collection~\cite{stamper2016kddcup2010}.
\end{itemize}

Table~\ref{tab:realdata_stats} summarizes the dataset domain and key statistics---the numbers of users, items, and interactions, as well as the average per-user sequence length---to characterize differences in scale, sparsity, and sequential structure, and to provide necessary context for cross-dataset comparisons in subsequent analyses.

\begin{table}[t]
	\caption{Overview of real-world datasets: domain and key statistics.}
	\label{tab:realdata_stats}
	\centering
	\begin{tabular}{llcccc}
		\toprule
		Dataset & Domain & \#Users & \#Items & \#Interactions & Avg. len.\\
		\midrule
		AOTM & Music playlist & 102 & 1{,}962 & 2{,}000 & 19.61\\
		Delicious & Social bookmarking & 170 & 10{,}345 & 11{,}087 & 65.22\\
		Online Retail & E-commerce transaction & 1{,}295 & 2{,}979 & 50{,}000 & 38.61\\
		Personality & User study & 44 & 16{,}408 & 50{,}000 & 1{,}136.36\\
		LastFM & Music listening & 1{,}892 & 12{,}523 & 87{,}061 & 46.02\\
		TaFeng & Grocery transaction & 32{,}266 & 23{,}812 & 743{,}228 & 23.03\\
		MovieLens-100K & Movie rating & 943 & 1{,}682 & 100{,}000 & 106.04\\
		MovieLens-1M & Movie rating & 6{,}040 & 3{,}706 & 1{,}000{,}209 & 165.60\\
		MovieLens-20M & Movie rating & 121{,}191 & 25{,}732 & 17{,}519{,}484 & 144.56\\
		Algebra & Education interaction & 575 & 211{,}397 & 813{,}661 & 1{,}415.06\\
		Bridge & Education interaction & 1{,}146 & 208{,}232 & 3{,}686{,}871 & 3{,}217.16\\
		\bottomrule
	\end{tabular}
\end{table}

\subsection{Recommendation Algorithms}\label{sec:exp_algorithms}
To characterize the attainable performance level of each real-world dataset under a standard offline evaluation protocol, we train and tune a representative set of sequential recommendation algorithms and use the best test performance as a reference for the achievable offline hit rate on that dataset. The selected models cover classical Markov-factorization approaches, RNN/CNN-based sequence models, self-attention-based Transformer architectures, and graph-neural and attention mechanisms designed for session-based recommendation. Below we briefly summarize the core modeling idea of each algorithm.

\begin{itemize}
		\item \textit{SASRec}~\cite{kang2018sasrec}. A Transformer-based sequential model with causal self-attention, which aggregates historical interactions via masked attention to capture long-range dependencies under a parallel computation framework.
		\item \textit{GRU4Rec}~\cite{hidasi2015gru4rec}. A GRU-based session recommender that encodes the interaction sequence into hidden states for next-step prediction, often optimized with ranking-oriented objectives in offline evaluation.
		\item \textit{BERT4Rec}~\cite{sun2019bert4rec}. A bidirectional Transformer trained with a masked-item prediction objective, leveraging richer contextual information to learn sequential representations.
		\item \textit{Caser}~\cite{tang2018caser}. A convolutional sequence-embedding model that uses one-dimensional convolutions on the interaction matrix to extract local patterns and compositional features, capturing short-term dependencies with improved training parallelism.
		\item \textit{NextItNet}~\cite{yuan2019nextitnet}. A sequence model based on dilated convolutions and residual blocks, which expands receptive fields to model longer histories without recurrent computation.
		\item \textit{FPMC}~\cite{rendle2010fpmc}. A factorized personalized Markov chain model that unifies long-term preferences (matrix factorization) and short-term transitions (Markov chains) in a single factorization framework.
		\item \textit{FOSSIL}~\cite{he2016fossil}. An extension of FPMC that further integrates similarity-based components with Markov transitions to improve robustness under sparse interactions and to capture short-term order signals more stably.
		\item \textit{TransRec}~\cite{he2018transrec}. A translation-based model that represents sequential transitions as translation operations in the embedding space, providing a concise geometric characterization of adjacent interactions.
		\item \textit{NARM}~\cite{li2017narm}. A neural attentive session recommender that encodes sessions with an RNN and applies attention over hidden states to emphasize historical fragments relevant to the current intent.
		\item \textit{STAMP}~\cite{liu2018stamp}. A session-based model that emphasizes short-term interests by combining attention with a memory-prioritization mechanism, highlighting the contribution of recent interactions under fast intent shifts.
		\item \textit{SRGNN}~\cite{wu2019srgnn}. A session-based approach that converts a session sequence into a directed graph and applies graph neural propagation to model nonlinear transitions and higher-order co-occurrences.
		\item \textit{GCSAN}~\cite{xie2019gcsan}. A hybrid model that combines graph-based representations with self-attention, learning session-graph dependencies while explicitly modeling the contribution of key positions in the sequence.
		\item \textit{HGN}~\cite{ma2019hgn}. A hierarchical gating model that fuses sequential signals at different granularities to dynamically balance short-term intent and long-term preference.
		\item \textit{RepeatNet}~\cite{ren2019repeatnet}. A session-based model designed for repeat-consumption behavior, introducing a gated mechanism to switch between repeat and exploration modes.
		\item \textit{SINE}~\cite{tan2021sine}. A sequential modeling framework with information-injection and representation-enhancement mechanisms to improve representation quality and generalization under sparse interactions.
		\item \textit{SHAN}~\cite{ying2018shan}. A hierarchical attention model that jointly captures short-term intent and long-term preference via layered attention over within-session interactions and global history.
		\item \textit{LightSANs}~\cite{fan2021lightsans}. A lightweight self-attention model for session recommendation that simplifies attention computation to reduce complexity while maintaining expressive capacity.
		\item \textit{FEARec}~\cite{liu2023fearec}. A frequency-enhanced hybrid-attention sequential model that incorporates frequency-domain modeling components to capture periodic patterns and multi-scale dependencies.
\end{itemize}

\subsection{Predictability Baselines}\label{sec:exp_predictability_baselines}
\noindent To compare against the proposed entropy-induced predictability estimator, we consider three representative predictability baselines. The first is the Fano-scaling framework, which has been widely used in the predictability literature~\cite{song2010limits,smith2014refined}; we instantiate it using either a global candidate size or an empirical reachability-based size. The second is a predictability mapping derived from permutation entropy, which measures local ordinal-pattern complexity of time series~\cite{scarpino2019outbreaks}. The Fano framework has also been applied to estimating predictability limits in session-based next-item recommendation~\cite{jarv2019predictability_session_nextitem}.
\begin{itemize}
	\item \textit{Fano predictability (\(\Pi_{\mathrm{Fano}}\)).}
	This baseline uses Fano scaling to map entropy to predictability~\cite{song2010limits,smith2014refined}: given an entropy estimate \(S\) and a candidate-set size \(N\), it numerically inverts Eq.~(\ref{eqn:song}) to obtain \(\Pi_{\mathrm{Fano}}\). In real-data experiments, we set \(N\) to the global item vocabulary size, i.e., \(N=|\mathcal{I}|\).

	\item \textit{Reachability-Fano predictability (\(\Pi_{\mathrm{Fano},N_r}\)).}
	Since the global candidate-set size may not reflect the effective branching complexity of one-step prediction in real sequential data~\cite{smith2014refined}, this baseline replaces the global \(N\) with an empirical reachability-based size. Specifically, viewing the sequence as a first-order transition process, for each current state \(x\) we collect the set of next states observed in the data, denoted \(\mathcal{N}(x)\), and define
	\begin{equation}
		N_r \;\triangleq\; \max_x |\mathcal{N}(x)|.
		\label{eq:reachability_nr}
	\end{equation}
	Intuitively, \(N_r\) quantifies the maximum number of distinct next choices from the most complex current state. We then replace \(N\) in Eq.~(\ref{eqn:song}) with \(N_r\) and solve the same equation numerically to obtain \(\Pi_{\mathrm{Fano},N_r}\). This baseline can provide a closer characterization of the local candidate space in large-scale sparse data, but its definition relies on a first-order approximation and finite-sample transition statistics.

	\item \textit{Permutation predictability (\(\Pi_{\mathrm{perm}}\)).}
	This baseline follows the use of permutation entropy for quantifying predictability in disease outbreak forecasting~\cite{scarpino2019outbreaks}. Permutation entropy measures structural complexity by counting relative order patterns within local windows~\cite{bandt2002permutation} and can be evaluated across multiple window scales. Given embedding dimension \(d\) and time delay \(\tau\), we construct length-\(d\) embedding vectors and convert each into an ordinal pattern by sorting their values. The empirical frequencies of ordinal patterns yield a Shannon entropy \(H(d,\tau)\), which is normalized by \(\log(d!)\) as
	\begin{equation}
		S_{\mathrm{perm}}(d,\tau)=\frac{H(d,\tau)}{\log(d!)}\in[0,1].
		\label{eq:perm_entropy}
	\end{equation}
	We take the minimum complexity over a prespecified set of scales, \(S_{\mathrm{perm}}=\min_{d,\tau} S_{\mathrm{perm}}(d,\tau)\), and map it to predictability as
	\begin{equation}
		\Pi_{\mathrm{perm}} \;=\; 1 - S_{\mathrm{perm}}.
		\label{eq:pi_perm}
	\end{equation}
	In this paper we use \(\tau=1\) and take the minimum over \(d\in\{3,4,5\}\). This baseline does not depend on a global candidate-set size; however, it measures the complexity of local ordinal structures, which is not fully aligned with the semantics of next-item classification.
\end{itemize}

\begin{figure}[t]
	\centering
	\begin{minipage}{0.49\linewidth}
		\centering
		\includegraphics[width=\linewidth]{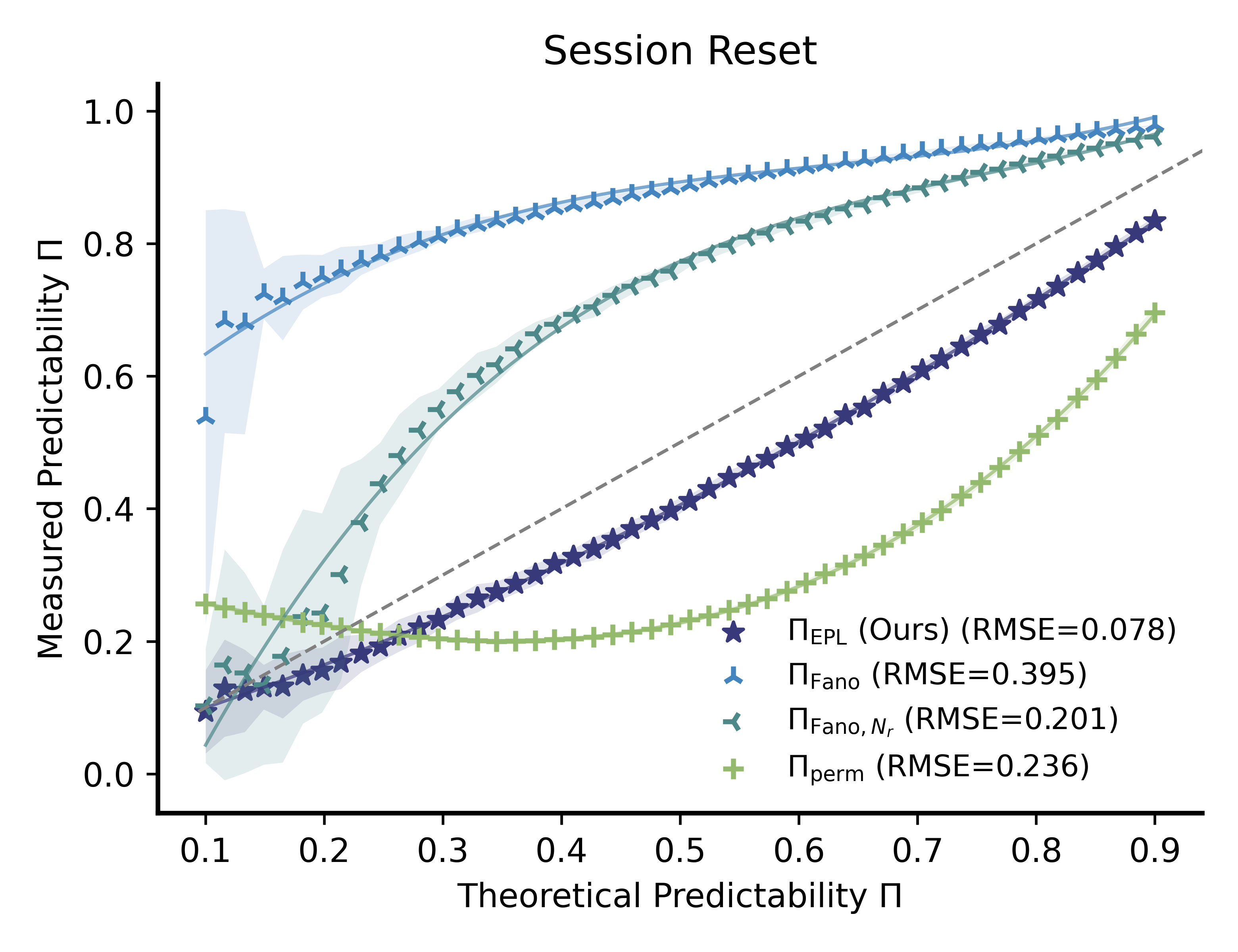}
	\end{minipage}\hfill
	\begin{minipage}{0.49\linewidth}
		\centering
		\includegraphics[width=\linewidth]{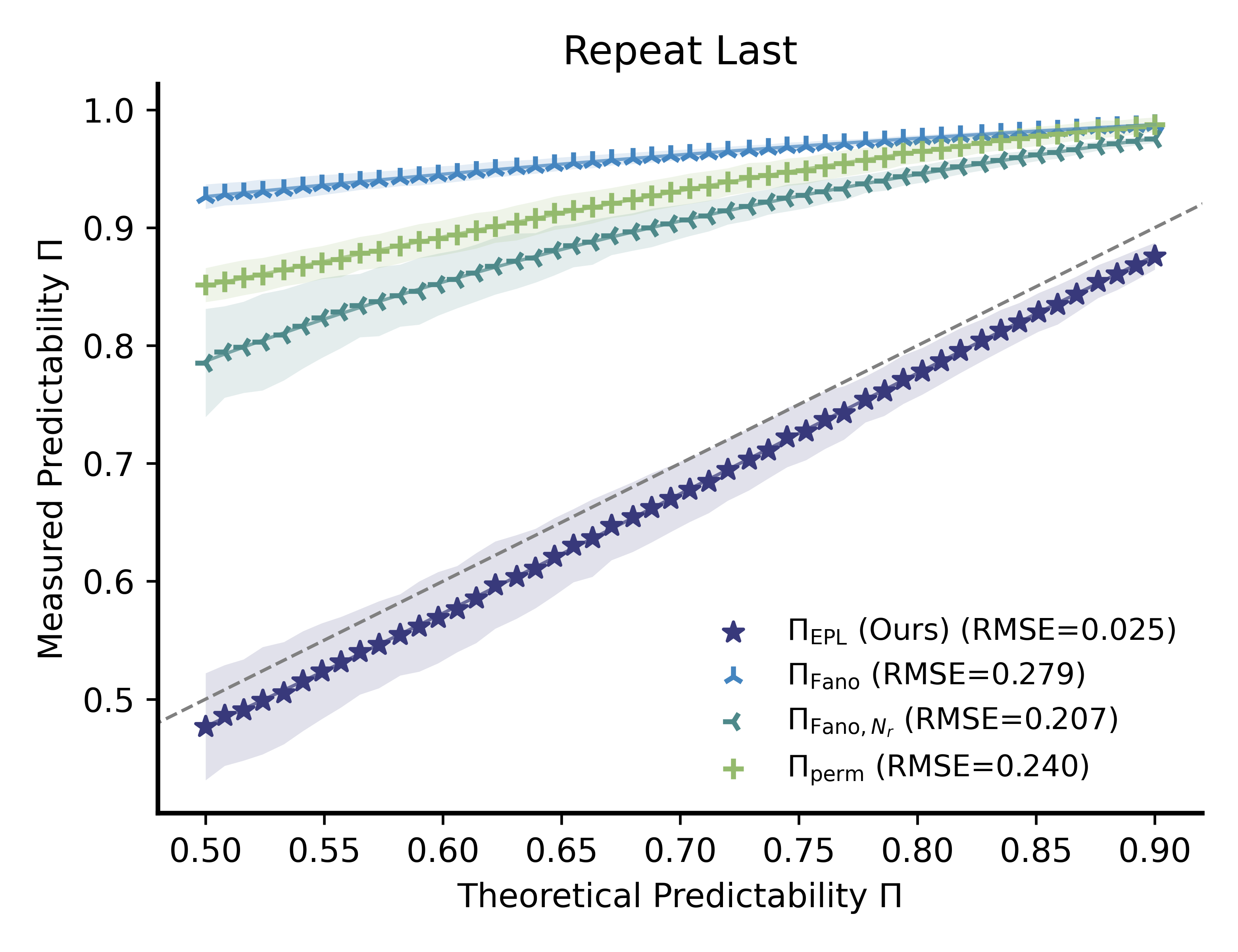}
	\end{minipage}
	\caption{Predictability estimation on two synthetic generators: \emph{Session Reset} and \emph{Repeat Last}. The horizontal axis shows the theoretical predictability, and the vertical axis reports predictability estimates produced by different methods.}
	\Description{Two-panel comparison of predictability estimates against an ideal reference upper bound on Session-Reset and Repeat-Last synthetic generators.}
	\label{fig:synth_two_generators}
\end{figure}

\subsection{Controlled Validation on Synthetic Sequences}\label{sec:exp_synth}
To compare different predictability estimators in a fully controlled setting, we construct two synthetic sequence generators: \emph{Session Reset} and \emph{Repeat Last}. Both generators produce user--item interaction sequences and share a key property: conditional on the latent state of the generator (e.g., the current preference set or the previous interaction), the next-step distribution admits a closed form, which allows an analytic derivation of the optimal hit rate under an ideal predictor with access to the latent state. We denote this ideal reference ceiling by \(\mathrm{Hit@1}^{\mathrm{Oracle}}\), where ``Oracle'' indicates that the predictor is allowed to observe the latent state at each time step. In experiments, we use \(\mathrm{Hit@1}^{\mathrm{Oracle}}\) to align task difficulty, and generate synthetic datasets spanning different difficulty levels by inverting the corresponding noise parameter. For each difficulty point, we repeat the generation multiple times and report the mean and error bars of predictability estimates.

\paragraph{Session Reset generator.}
This process captures a dynamics where the preference set occasionally resets. For each user, at time \(t\) we maintain a current preference set \(F_t\subseteq\mathcal{I}\), where \(|\mathcal{I}|=N\) and \(|F_t|=m\). At each step, with probability \(\rho\) we reset \(F_t\) to \(m\) items sampled uniformly without replacement from \(\mathcal{I}\); we then generate the next item \(x_{t+1}\) by sampling uniformly from \(F_t\) with probability \(1-\varepsilon\), and sampling uniformly from \(\mathcal{I}\) with probability \(\varepsilon\).
Under the \(\mathrm{Oracle}\) setting, the predictor observes \(F_t\) at each time step, and \(\mathrm{Hit@1}^{\mathrm{Oracle}}\) can be derived analytically. In particular, when \(m=1\),
\begin{equation}
	\mathrm{Hit@1}^{\mathrm{Oracle}}=(1-\varepsilon)+\varepsilon/N.
	\label{eq:oracle_hit1_session_reset}
\end{equation}
\noindent We align difficulty by \(\mathrm{Hit@1}^{\mathrm{Oracle}}\) and invert Eq.~(\ref{eq:oracle_hit1_session_reset}) to obtain the noise parameter \(\varepsilon\). Here \(N\) characterizes the scale of the candidate space and is fixed to a large-scale setting, while \(m\) and \(\rho\) are held constant in our experiments.

\paragraph{Repeat Last generator.}
This process captures the simplest ``repeat the previous item'' mechanism. For each user, the initial item is sampled as \(x_1\sim \mathrm{Uniform}(\mathcal{I})\). For \(t\ge 1\), with probability \(p\) we set \(x_{t+1}=x_t\); otherwise, \(x_{t+1}\) is sampled uniformly from \(\mathcal{I}\).
Under the \(\mathrm{Oracle}\) setting, \(\mathrm{Hit@1}^{\mathrm{Oracle}}\) also admits a closed form:
\begin{equation}
	\mathrm{Hit@1}^{\mathrm{Oracle}}=p+(1-p)/N.
	\label{eq:oracle_hit1_repeat_last}
\end{equation}
\noindent We again align difficulty by \(\mathrm{Hit@1}^{\mathrm{Oracle}}\) and invert Eq.~(\ref{eq:oracle_hit1_repeat_last}) to obtain the noise parameter \(p\). The candidate-space scale \(N\) is fixed to a large-scale setting.

\paragraph{Results.}
We compare four predictability estimators by examining their deviations from the ideal reference ceiling \(\mathrm{Hit@1}^{\mathrm{Oracle}}\) under the two generators. Fig.~\ref{fig:synth_two_generators} reports the results, where the horizontal axis is the theoretical predictability and the vertical axis is the predictability estimate produced by each method; the gray dashed line indicates perfect agreement.
Under \emph{Session Reset}, \(\Pi_{\mathrm{Fano}}\) exhibits substantial variability in the low-predictability regime and deviates from the reference line overall (RMSE\(=0.395\)). Replacing the global candidate size with the reachability-based size yields \(\Pi_{\mathrm{Fano},N_r}\) with a trend closer to the reference and a lower error (RMSE\(=0.201\)), although systematic deviation and relatively large uncertainty remain. The permutation-entropy baseline produces a smoother curve, but the mapping to the target task is unstable (RMSE\(=0.236\)), especially in the low-predictability regime. In contrast, our method \(\Pi_{\mathrm{EPL}}=\exp(-\hat{S})\) attains the lowest RMSE (0.078) under this mechanism and varies monotonically with the difficulty control variable.
Under \emph{Repeat Last}, the two Fano-based estimates and the permutation-entropy baseline have limited ability to distinguish difficulty changes, producing near-saturated high predictability over a wide range (RMSEs of 0.278, 0.206, and 0.240, respectively). Our method aligns closely with the ideal reference curve under this mechanism (RMSE\(=0.027\)). Overall, the controlled synthetic experiments indicate that the proposed entropy-induced mapping provides a stable characterization of difficulty changes driven by noise parameters across different generative mechanisms, and yields lower errors than the baselines.

\subsection{Candidate Set Size Sensitivity}\label{sec:exp_n_sensitivity}
In sequential recommendation, the candidate-set size \(N\) is often shaped by evaluation protocols and retrieval-system configurations rather than being an intrinsic property of the task. To examine the sensitivity of different predictability estimators to \(N\), we adopt a controlled synthetic generator, \emph{Context Switch}, and sweep \(N\) across multiple orders of magnitude. The key idea is to invert the noise parameter so that the theoretical predictability remains constant under different \(N\), thereby attributing variations in estimator outputs primarily to their dependence on the candidate-set size.

\paragraph{Context Switch mechanism.}
This mechanism simulates a sequential process in which a user's interest context switches over time. Let \(\mathcal{I}\) be the global item set with \(|\mathcal{I}|=N\). We construct \(C\) contexts in advance, where each context corresponds to an item subset of size \(m_c\) sampled uniformly without replacement from \(\mathcal{I}\). For each user, at time \(t\) there is a current context \(c_t\). At each step, the context switches to a different one (distinct from the current context) with probability \(s\). The next item is then generated by sampling uniformly from the current context subset with probability \(1-\varepsilon\), and sampling uniformly from \(\mathcal{I}\) with probability \(\varepsilon\) as noise.
Conditional on the current context, the next-step distribution has two probability levels. Any item within the context subset has probability
\begin{equation}
	\frac{1-\varepsilon}{m_c}+\frac{\varepsilon}{N},
	\label{eq:context_switch_in_prob}
\end{equation}
while any item outside the subset has probability \(\varepsilon/N\). Therefore, under this setting, \(\mathrm{Hit@1}^{\mathrm{Oracle}}\) is directly determined by Eq.~(\ref{eq:context_switch_in_prob}) and does not depend on \(s\) or \(C\): these parameters affect the switching frequency and the resulting sequence structure, but do not change the form of the conditional distribution given the current context.

\paragraph{\(N\)-sweep design.}
We fix the theoretical predictability to \(\mathrm{Hit@1}^{\mathrm{Oracle}}=0.10\) and, for each \(N\) in the sweep, invert the corresponding \(\varepsilon(N)\) so that the same ideal reference ceiling is maintained across different candidate-space sizes. All other generation parameters are kept fixed: \(C=5\), \(m_c=5\), and \(s=0.05\). We generate sequences for \(300\) users with per-user length \(200\), and report the mean and error bars over multiple independent repetitions. For Fano-based baselines, the candidate size in the inversion equation is set to the current \(N\) being swept.

\begin{figure}[t]
	\centering
	\includegraphics[width=0.65\linewidth]{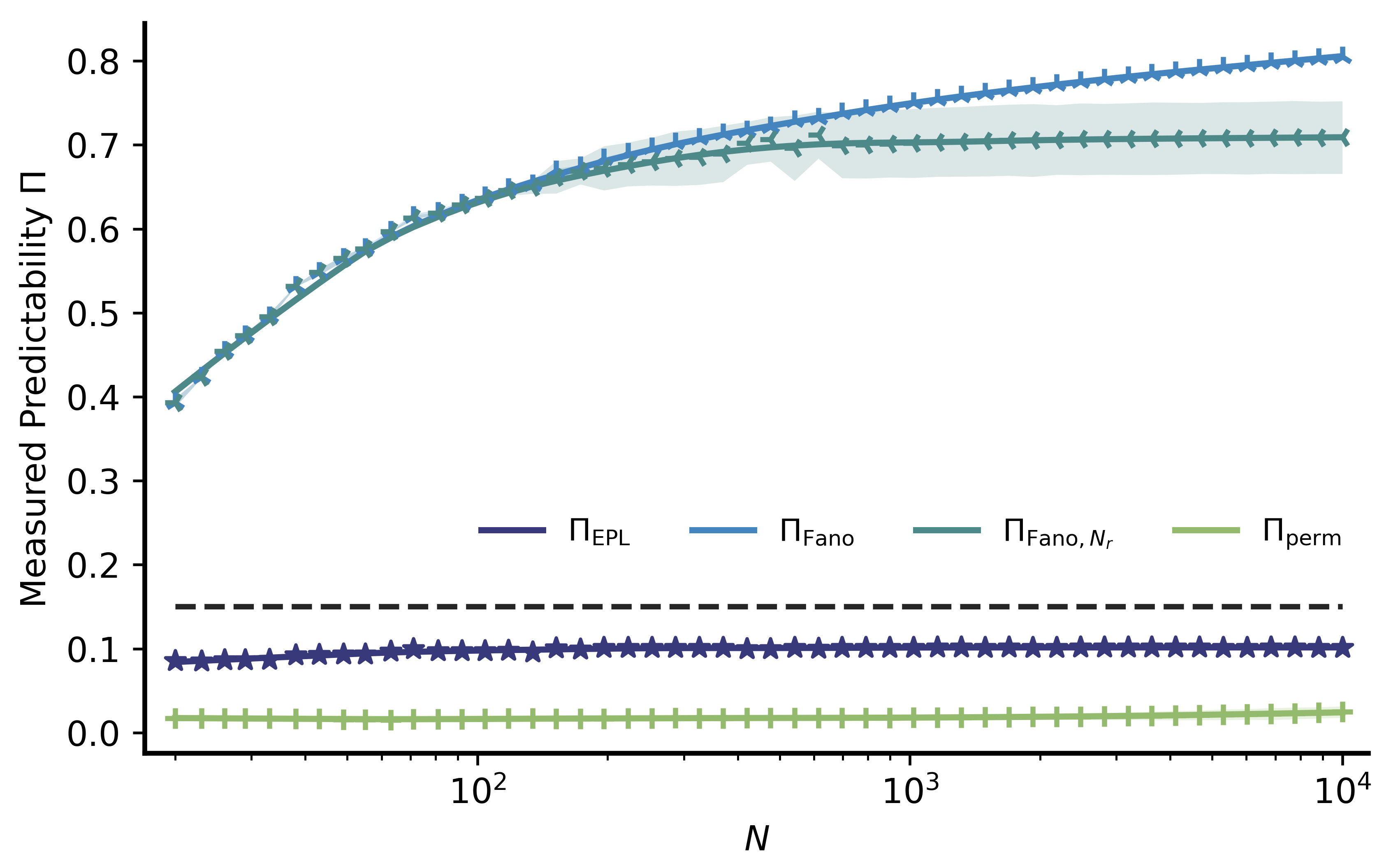}
	\caption{\(N\)-sweep results under the Context Switch generator. We fix \(\mathrm{Hit@1}^{\mathrm{Oracle}}=0.15\) and vary the candidate-set size \(N\) to compare how different predictability estimators change with \(N\).}
	\Description{N-sweep results on the Context Switch synthetic generator, comparing predictability estimators under a fixed ideal reference Hit@1 while varying candidate set size N.}
	\label{fig:context_switch}
\end{figure}

As shown in Fig.~\ref{fig:context_switch}, although the theoretical predictability is fixed for all \(N\), the Fano-based estimates still increase as \(N\) grows: \(\Pi_{\mathrm{Fano}}\) and \(\Pi_{\mathrm{Fano},N_r}\) rise from moderate values in the small-\(N\) regime and become much larger than the ideal reference ceiling in the large-\(N\) regime. This observation indicates that, even when the intrinsic difficulty of the underlying generative process is strictly controlled, the entropy--predictability inversion under the Fano route can be systematically re-calibrated by the candidate-space size, yielding estimates that drift with \(N\).
In contrast, our method \(\Pi_{\mathrm{EPL}}=\exp(-\hat{S})\) remains approximately constant with relatively small fluctuations throughout the sweep, reflecting robustness to changes in candidate-space size. In this mechanism, \(\Pi_{\mathrm{EPL}}\) is consistently lower than the ideal reference ceiling implied by the theoretical predictability, which corresponds to a more conservative difficulty characterization consistent with the lower-bound property in Theorem~\ref{thm:entropy_lower_bound}. The permutation-entropy baseline is also insensitive to \(N\), but its estimates remain close to zero and provide limited discrimination as a reference. Overall, the Context Switch \(N\)-sweep reveals the pronounced dependence of Fano-based estimators on the candidate-set size and further supports the stability of our method in large candidate spaces.

\subsection{Real-world Validation: Dataset-level Ranking Consistency}\label{sec:exp_real}

\begin{table}[t]
	\caption{Best model performance on real-world datasets: for each dataset, we select the model with the highest test hit rate from the candidate algorithm set and report \(\mathrm{Hit@1}\) and \(\mathrm{Hit@20}\).}
	\label{tab:best_hit20_real}
	\centering
	\begin{tabular*}{0.8\linewidth}{@{\extracolsep{\fill}} l l r r}
		\toprule
		Dataset & Best model & Test \(\mathrm{Hit@1}\) & Test \(\mathrm{Hit@20}\)\\
		\midrule
		AOTM & SASRec & 0.0000 & 0.1485\\
		Delicious & BERT4Rec & 0.0000 & 0.0217\\
		Online Retail & SASRec & 0.0271 & 0.3254\\
		Personality & SHAN & 0.0000 & 0.1818\\
		LastFM & FOSSIL & 0.0926 & 0.2534\\
		TaFeng & FOSSIL & 0.0090 & 0.1317\\
		MovieLens-100K & LightSANs & 0.0180 & 0.2460\\
		MovieLens-1M & GRU4Rec & 0.0677 & 0.3874\\
		MovieLens-20M & GRU4Rec & 0.0482 & 0.2867\\
		Algebra & BERT4Rec & 0.4146 & 0.7317\\
		Bridge & GRU4Rec & 0.4546 & 0.6798\\
		\bottomrule
	\end{tabular*}
\end{table}

\begin{figure}[t]
	\centering
	\begin{minipage}{0.49\linewidth}
		\centering
		\includegraphics[width=\linewidth]{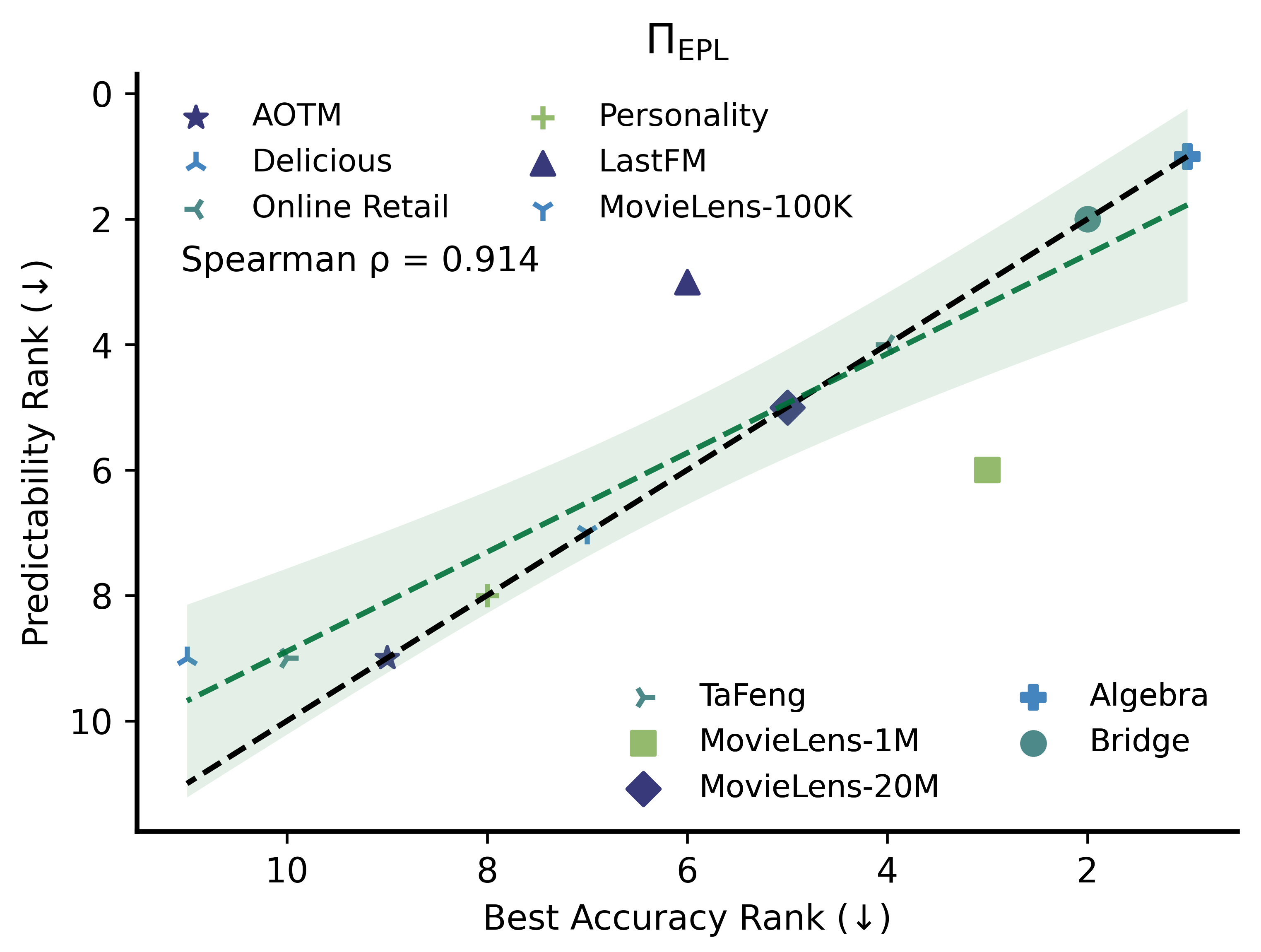}
	\end{minipage}\hfill
	\begin{minipage}{0.49\linewidth}
		\centering
		\includegraphics[width=\linewidth]{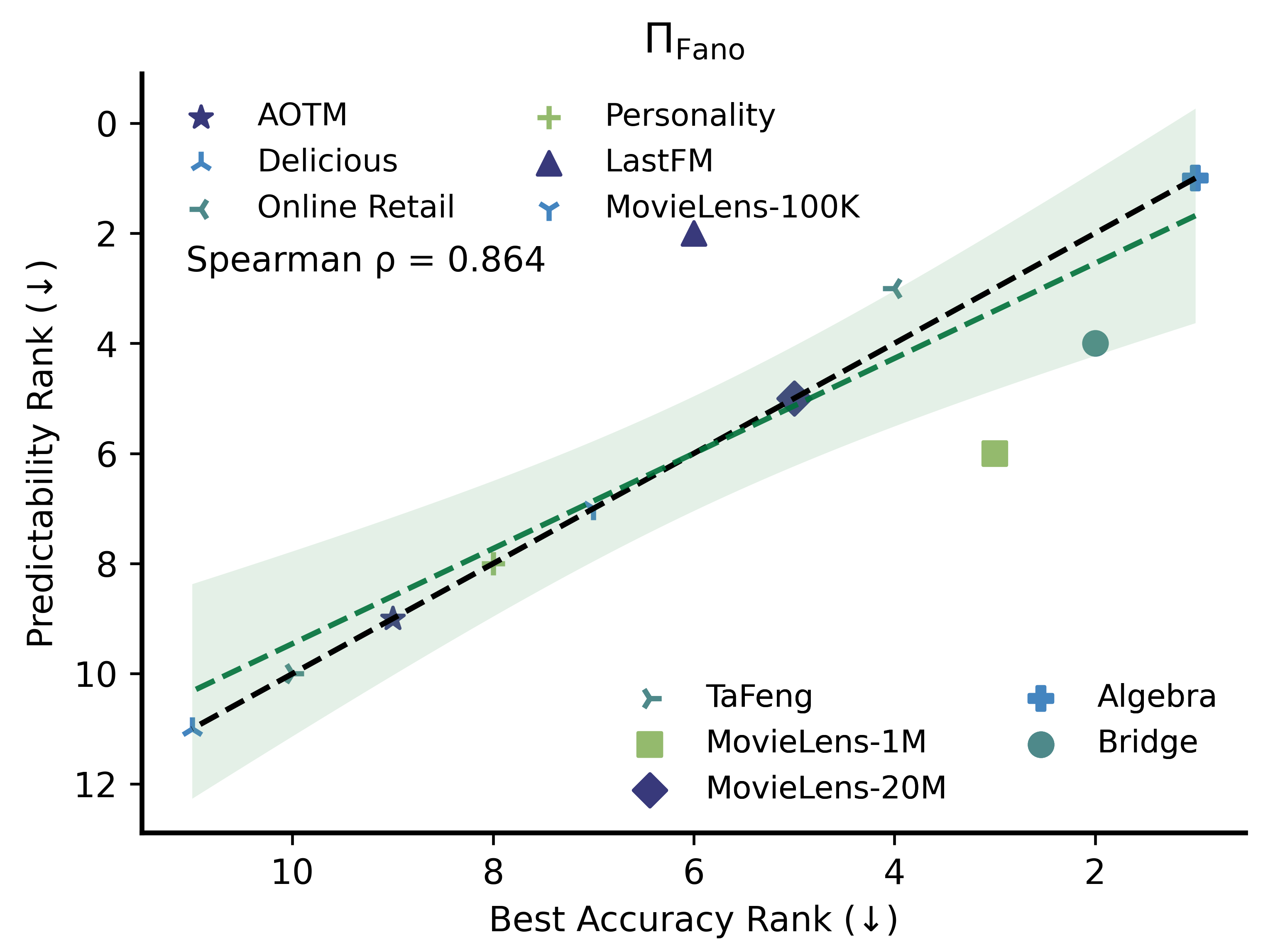}
	\end{minipage}
	\caption{Ranking consistency on real-world datasets. The horizontal axis shows the rank of the best model's \(\mathrm{Hit@20}\), and the vertical axis shows the rank of a predictability metric; the diagonal indicates perfect agreement. Left: \(\Pi_{\mathrm{EPL}}\). Right: \(\Pi_{\mathrm{Fano}}\).}
	\Description{Rank consistency between dataset-level predictability and best model Hit@20 on real datasets, shown for the proposed EPL estimator and the Fano baseline.}
	\label{fig:rank_consistency_real}
\end{figure}

\begin{figure}[t]
	\centering
	\begin{minipage}{0.49\linewidth}
		\centering
		\includegraphics[width=\linewidth]{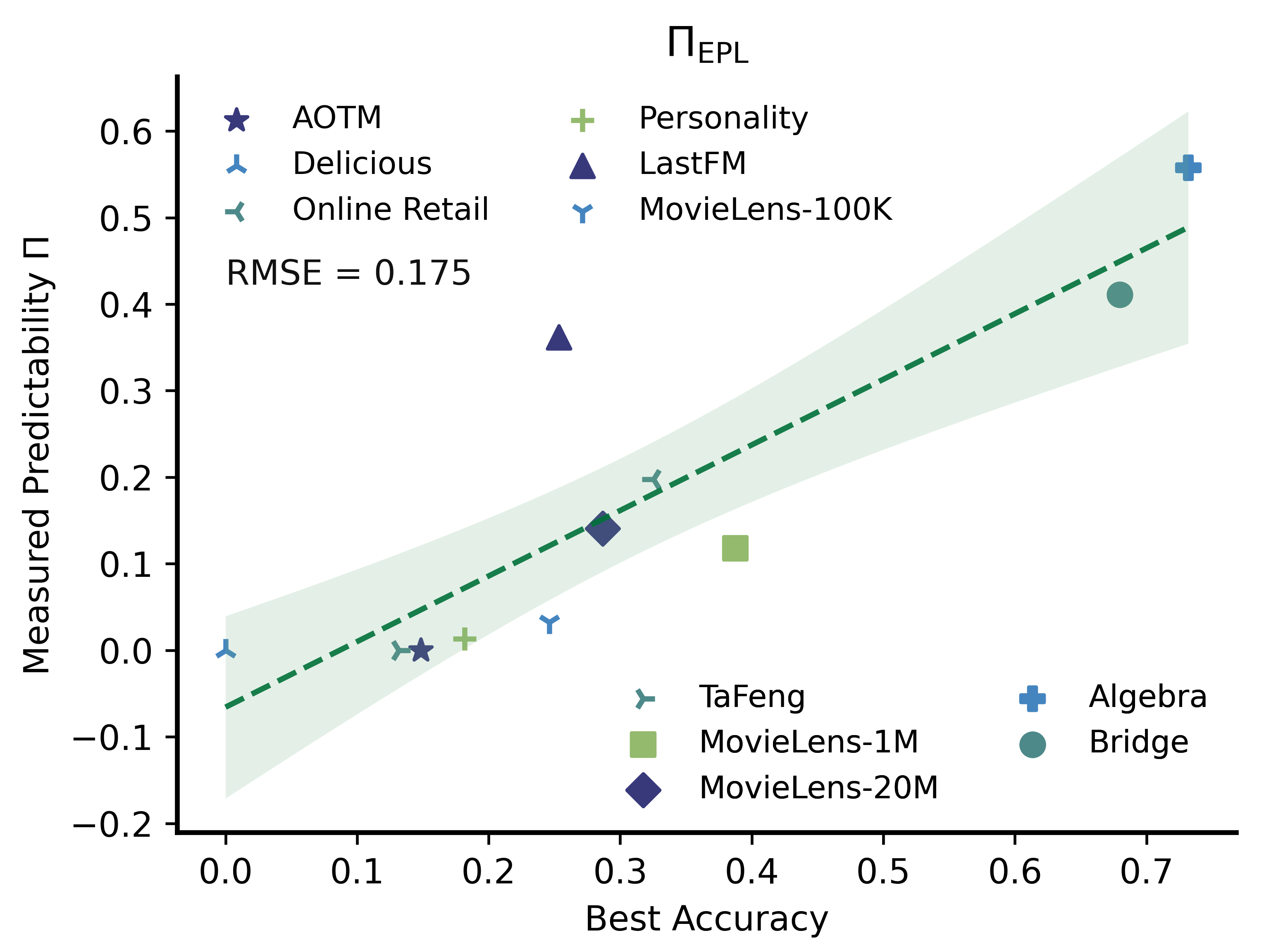}
	\end{minipage}\hfill
	\begin{minipage}{0.49\linewidth}
		\centering
		\includegraphics[width=\linewidth]{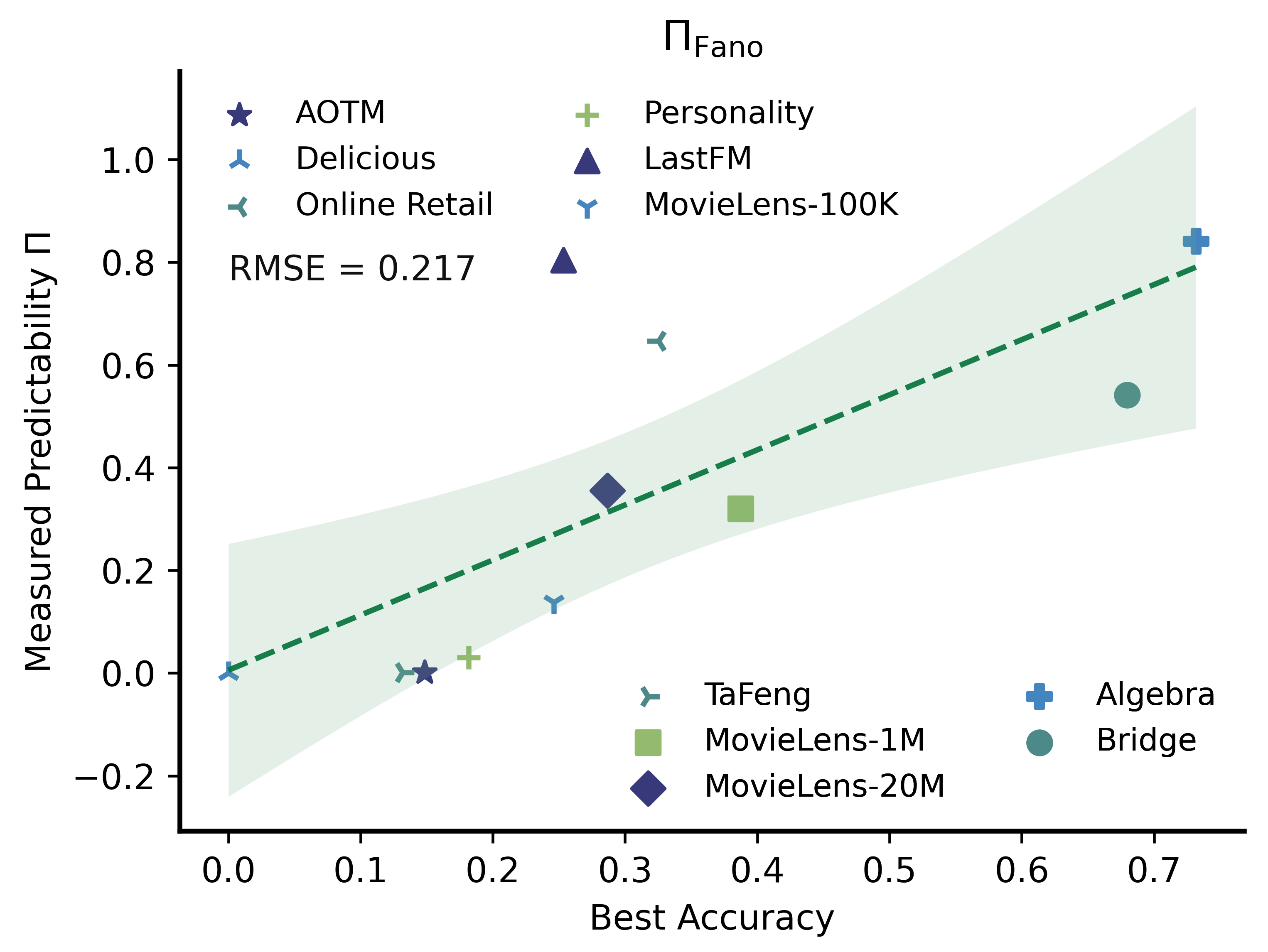}
	\end{minipage}
	\caption{Value-level comparison on real-world datasets. The horizontal axis shows the best model's test \(\mathrm{Hit@20}\), and the vertical axis shows the corresponding predictability estimate. Left: \(\Pi_{\mathrm{EPL}}\). Right: \(\Pi_{\mathrm{Fano}}\).}
	\Description{Value-level comparison between dataset-level predictability estimates and best model Hit@20 on real datasets, shown for the proposed EPL estimator and the Fano baseline.}
	\label{fig:value_consistency_real}
\end{figure}
On real-world datasets, we train and tune a set of representative sequential recommendation algorithms under a unified training and evaluation protocol, and use the best test hit rate as a reference for the attainable offline performance under this protocol. In large candidate spaces, \(\mathrm{Hit@1}\) is often extremely low or even zero and thus provides limited discriminative information. Following the classical setting for predictability limits in session-based next-item recommendation~\cite{jarv2019predictability_session_nextitem}, we use \(\mathrm{Hit@20}\) as the primary accuracy metric for comparison. Table~\ref{tab:best_hit20_real} summarizes the best models' \(\mathrm{Hit@1}\) and \(\mathrm{Hit@20}\) on each dataset.
To examine whether predictability estimates provide a consistent characterization of the relative difficulty across real-world datasets, we analyze both rank consistency and value-level agreement. Specifically, for each dataset \(d\), we define the best-model accuracy \(A_d\) as the best \(\mathrm{Hit@20}\) in Table~\ref{tab:best_hit20_real}. We also define a dataset-level predictability score \(P_d\) by estimating entropy on the sequences and aggregating the resulting predictability values into a single dataset-level number via a weighted aggregation. We first compare the ranking consistency between \(\{A_d\}\) and \(\{P_d\}\) using Spearman's rank correlation, as shown in Fig.~\ref{fig:rank_consistency_real}. We then compare \(\{A_d\}\) and \(\{P_d\}\) directly at the value level to reveal potential headroom across datasets, as shown in Fig.~\ref{fig:value_consistency_real}.

As shown in Fig.~\ref{fig:rank_consistency_real}(a), \(\Pi_{\mathrm{EPL}}\) exhibits a high agreement between the dataset ranking induced by predictability and the ranking induced by the best model, with a Spearman rank correlation of \(0.914\); most datasets lie close to the diagonal. In comparison, Fig.~\ref{fig:rank_consistency_real}(b) shows a lower consistency for \(\Pi_{\mathrm{Fano}}\), with Spearman correlation \(0.864\). The remaining baselines are weaker, with Spearman correlations of \(0.536\) for \(\Pi_{\mathrm{Fano},N_r}\) and \(-0.10\) for \(\Pi_{\mathrm{perm}}\). These results indicate that \(\Pi_{\mathrm{EPL}}\) can produce a more consistent ranking of relative dataset difficulty without training recommendation models.
At the value level, Fig.~\ref{fig:value_consistency_real} further shows the correspondence between predictability estimates and the best \(\mathrm{Hit@20}\). Compared with \(\Pi_{\mathrm{Fano}}\), our method exhibits a more stable relationship with the best \(\mathrm{Hit@20}\) and a smaller overall error: the RMSE of \(\Pi_{\mathrm{EPL}}\) is \(0.175\), lower than the RMSE of \(\Pi_{\mathrm{Fano}}\) (\(0.217\)). This result suggests that, on real-world datasets, \(\Pi_{\mathrm{EPL}}\) provides not only a more consistent difficulty ranking but also a more reliable numerical reference for attainable performance.
In addition, both Fig.~\ref{fig:rank_consistency_real} and Fig.~\ref{fig:value_consistency_real} show that LastFM exhibits a larger gap between its predictability level and its best \(\mathrm{Hit@20}\), suggesting substantial headroom and motivating further algorithm design tailored to its sequential structure.

\subsection{Predictability-Guided Training Data Selection}\label{sec:exp_data_selection}
The preceding experiments mainly position predictability as a model-agnostic characterization of intrinsic task attainability, enabling analyses of relative difficulty and headroom across datasets and user groups. A more practical question, however, is whether predictability can also be used proactively to guide data construction. In particular, under constrained training budgets, if user-level predictability estimates are already available, can they help identify additional training data that are more valuable for downstream model learning?

To answer this question, we design a controlled data-selection protocol. We first partition long-sequence users into two disjoint sets, denoted \texttt{eval users} and \texttt{candidate users}. The \texttt{eval users} define a fixed evaluation task: for each \texttt{eval user}, the historical prefix is included in the base training set, while the last next-item instance is held out for testing. As a result, all compared strategies share the same test set, the same evaluation item space, and the same base training task. The \texttt{candidate users} serve only as a pool of additional training resources. We compute user-level predictability scores only on \texttt{candidate users} and construct three extra-data selection strategies accordingly: selecting users with high predictability (\texttt{HighPi}), selecting users uniformly at random (\texttt{Random}), and selecting users with low predictability (\texttt{LowPi}). Under each extra-data budget, the amount of additional training data is kept the same across strategies, so the comparison is strictly restricted to the following question: under the same evaluation task, the same base training set, and the same extra-data budget, do different types of additional training data contribute differently to model performance?

This protocol is important because it separates two distinct questions: whether a user group is itself easier to predict, and whether that group provides more useful training signals for another fixed target task. Since \texttt{eval users}, test instances, and the evaluation space are all held fixed across strategies, the comparison is not about differences in test difficulty across user groups. Rather, it measures how additional training data drawn from different user groups affect performance on the same downstream task. Therefore, this experiment evaluates predictability as a signal of training-data value, rather than merely re-confirming that high-predictability users are easier to model.

\begin{figure*}[t]
	\centering
	\includegraphics[width=0.75\textwidth]{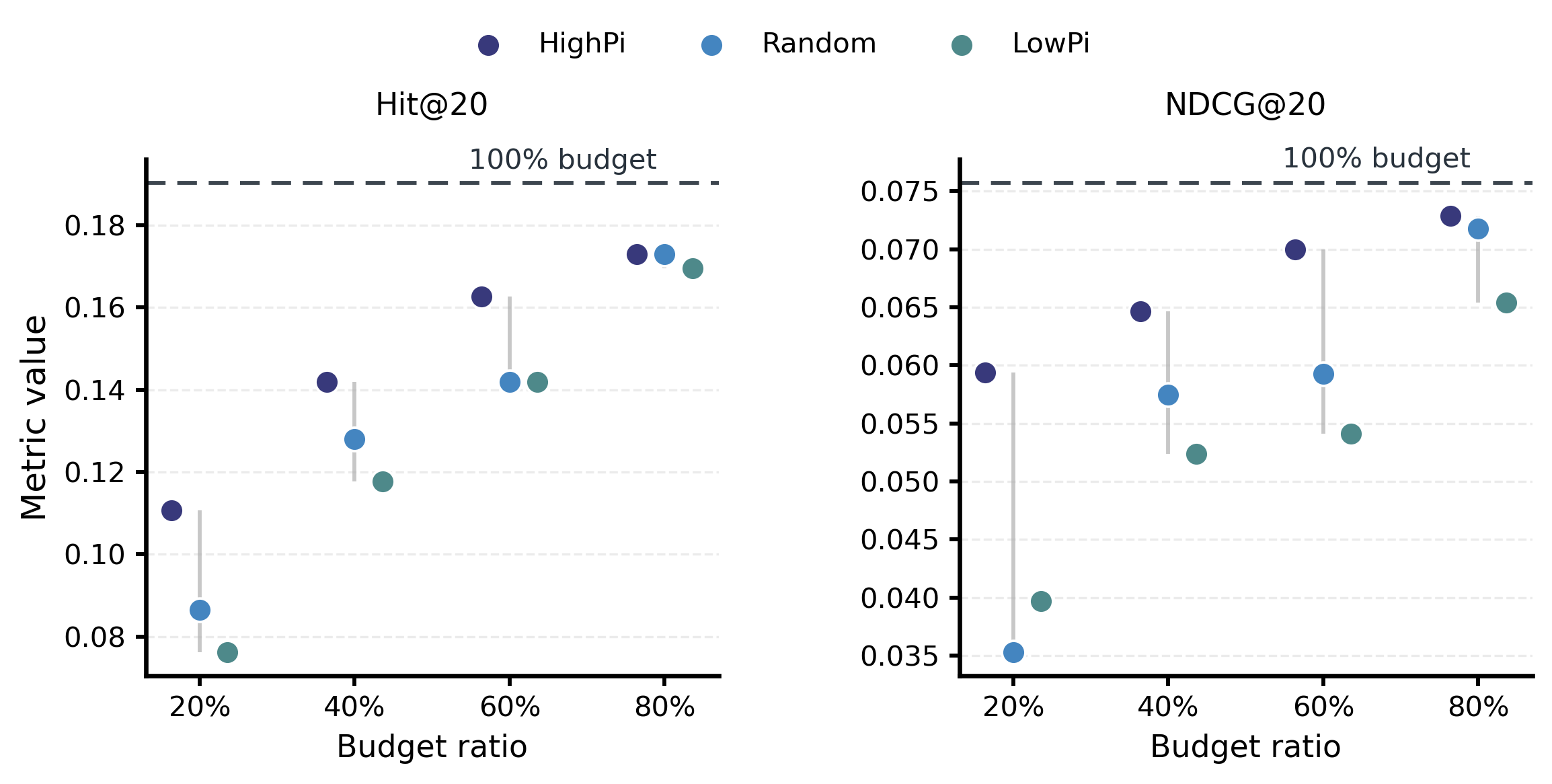}
	\caption{Predictability-guided training data selection under different extra-data budgets. Under a fixed evaluation task, a fixed base training set, and equal additional-data budgets, we compare three strategies for selecting extra training users from the candidate pool: \texttt{HighPi}, \texttt{Random}, and \texttt{LowPi}. The left and right panels report \(\mathrm{Hit@20}\) and \(\mathrm{NDCG@20}\), respectively. The dashed line denotes the corresponding result when all candidate-user data are included.}
	\Description{Predictability-guided training data selection under different extra-data budgets, comparing HighPi, Random, and LowPi on Hit@20 and NDCG@20, with a dashed line indicating the full-data reference.}
	\label{fig:pi_guided_data_selection}
\end{figure*}

Figure~\ref{fig:pi_guided_data_selection} reports the results under different budget ratios. Across both \(\mathrm{Hit@20}\) and \(\mathrm{NDCG@20}\), the \texttt{HighPi} strategy consistently outperforms \texttt{Random} and \texttt{LowPi}, yielding an overall ordering of \texttt{HighPi > Random > LowPi}; this advantage is particularly pronounced in the low-budget regime. These results indicate that, under a fixed evaluation task and a fixed additional-data budget, extra training samples contributed by high-predictability users are more valuable for downstream learning. Taken together, the findings suggest that predictability is not only a model-agnostic characterization of task attainability, but also a practically useful signal for training-data construction and budget-aware resource allocation.

\subsection{Analysis by Novelty Preference}\label{sec:exp_novelty}
Beyond dataset-level difficulty, sequential recommendation often exhibits substantial heterogeneity at the user level. To quickly identify which users are intrinsically easier to predict without training models, we perform group-based predictability diagnosis. We use users' novelty preference as the grouping dimension. For an item \(i\), we define its popularity \(\text{pop}(i)\) as its relative frequency in all interactions, and measure its novelty by \(-\log \text{pop}(i)\). A user's novelty preference is then defined as the average novelty of items in the user's historical interactions; a larger value indicates a stronger preference for less popular and more novel items. Within each dataset, we split users into two groups based on this score, denoted Q1 and Q2, corresponding to lower and higher novelty preference, and compute predictability estimates within each group.

Fig.~\ref{fig:cohort_novelty_pi} shows predictability differences across novelty-preference groups and datasets. Overall, the low-novelty-preference group (Q1) tends to have higher predictability than the high-novelty-preference group (Q2), suggesting that interaction sequences with more popular and more repetitive choices exhibit stronger regularities and are therefore easier to predict in a model-agnostic sense. In contrast, users with higher novelty preference interact with a more diverse set of items and exhibit stronger exploration, leading to higher uncertainty and lower predictability.
To examine whether this group difference is reflected in observed recommendation performance, we further report the best model's offline \(\mathrm{Hit@20}\) under the same user-group partition, as shown in Fig.~\ref{fig:cohort_novelty_hit20}. The gap in \(\mathrm{Hit@20}\) between the two groups is consistent with the predictability difference, indicating that predictability estimates can effectively characterize relative attainability across user groups. Together with the dataset-level ranking consistency results in Section~\ref{sec:exp_real}, these findings suggest that our method enables rapid within-dataset difficulty profiling, supporting practical diagnostics for dataset selection, user-group analysis, and the choice of model families.

\begin{figure*}[t]
	\centering
	\includegraphics[width=\textwidth]{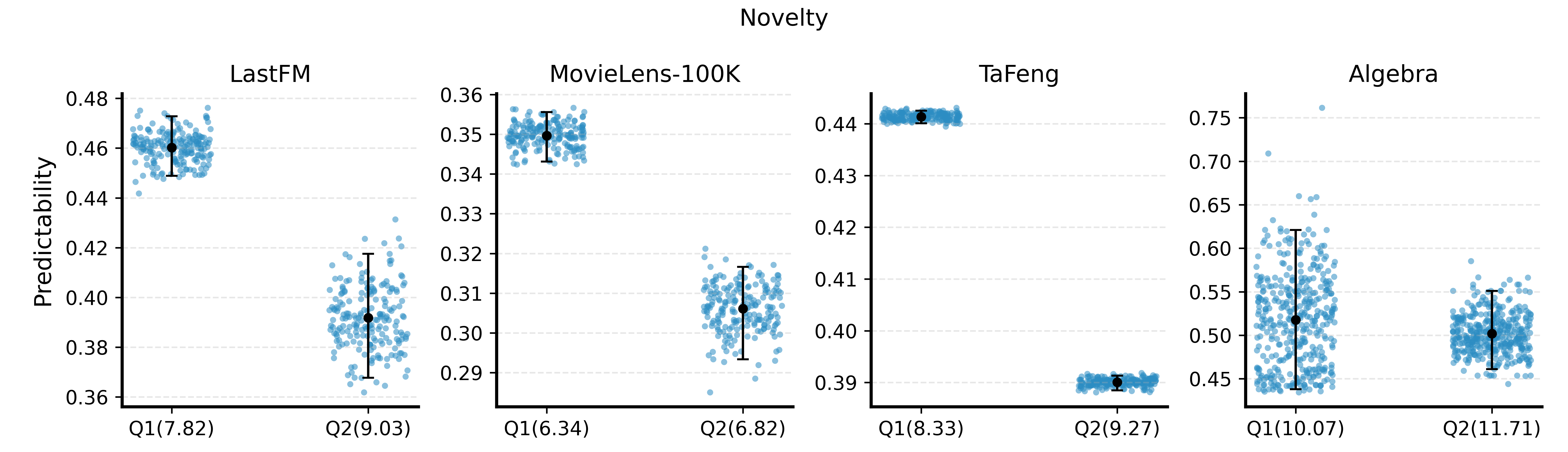}
	\vspace{0.5em}
	\includegraphics[width=\textwidth]{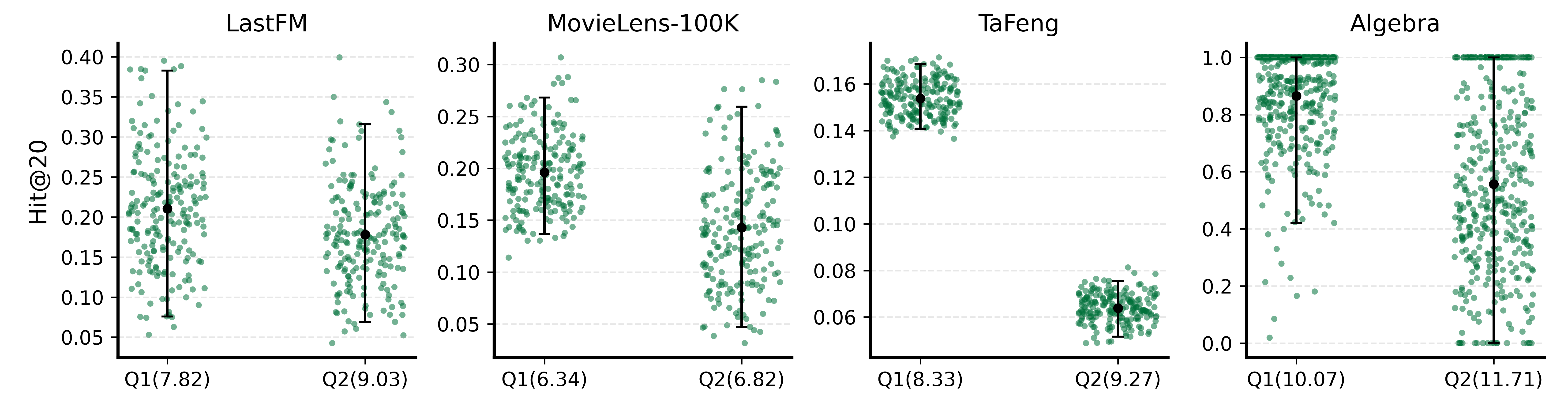}
	\caption{Analysis by novelty preference. The top panel reports predictability differences between user groups with lower (Q1) and higher (Q2) novelty preference, and the bottom panel reports the corresponding differences in the best model's \(\mathrm{Hit@20}\), assessing whether group-level predictability aligns with observable recommendation performance. Dots denote user-level estimates; black points and error bars show within-group aggregates and uncertainty.}
	\Description{Novelty-preference group results across datasets: predictability (top) and best-model Hit@20 (bottom).}
	\label{fig:cohort_novelty_pi}
	\label{fig:cohort_novelty_hit20}
\end{figure*}

\subsection{Analysis by Long-tail Exposure}\label{sec:exp_longtail}
In addition to novelty preference, users' exposure to long-tail items can also affect the intrinsic difficulty of sequential recommendation. We quantify long-tail exposure by the fraction of a user's interactions that fall into a globally low-frequency item set, and partition users into two groups, Q1 and Q2, corresponding to lower and higher long-tail exposure. Fig.~\ref{fig:cohort_longtail_pi} reports the resulting group-level predictability across datasets. The low-exposure group tends to exhibit higher predictability, whereas the high-exposure group shows lower predictability with larger variability. This pattern suggests that long-tail consumption is associated with both lower attainable accuracy and greater heterogeneity in behavioral regularities, making further performance improvements more dependent on effectively leveraging sparse signals and long-tail structures. Overall, this grouped analysis further highlights the value of predictability estimation as a fast diagnostic tool, enabling the identification of user regions whose difficulty is driven by long-tail exposure and informing algorithm selection and system design.

\begin{figure*}[t]
	\centering
	\includegraphics[width=\textwidth]{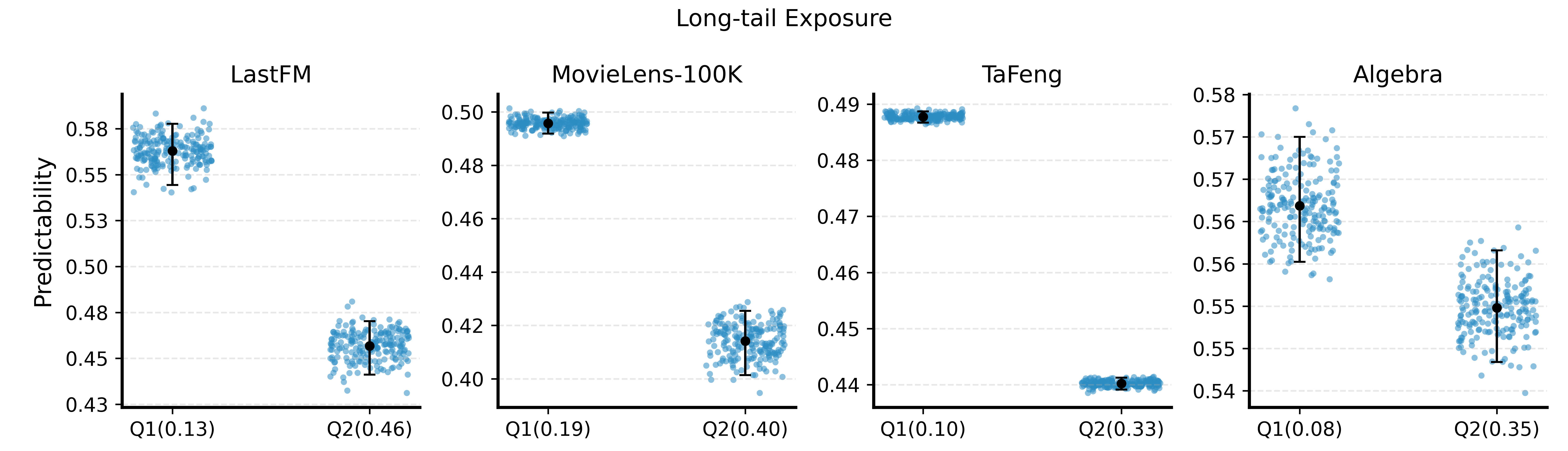}
	\caption{Analysis by long-tail exposure. Users are grouped by the fraction of long-tail items in their interaction histories, and predictability is compared across groups. Q1/Q2 correspond to lower/higher long-tail exposure; dots denote user-level estimates, and black points with error bars indicate within-group aggregates and uncertainty.}
	\Description{Predictability under long-tail exposure groups across several datasets.}
	\label{fig:cohort_longtail_pi}
	\vspace{-1.2em}
\end{figure*}

\subsection{Analysis by User Activity}\label{sec:exp_activity}
User activity directly determines the observable sequence length and the density of behavioral signals, and thus can affect predictability. We measure activity by the number of interactions per user and partition users into two groups, Q1 and Q2, corresponding to higher and lower activity. Fig.~\ref{fig:cohort_activity_pi} shows the group-wise results across datasets. Overall, the high-activity group exhibits higher predictability. This can be attributed to two factors: longer histories make stable preferences more observable and reduce estimation variance, and highly active users tend to have more concentrated interest distributions and stronger repetitive regularities. These results indicate that predictability estimation can quickly identify user regions that are both low-activity and hard to predict at a very low computational cost, providing actionable guidance for strategy design and model selection in cold-start and sparse regimes.

\begin{figure*}[t]
	\centering
	\includegraphics[width=\textwidth]{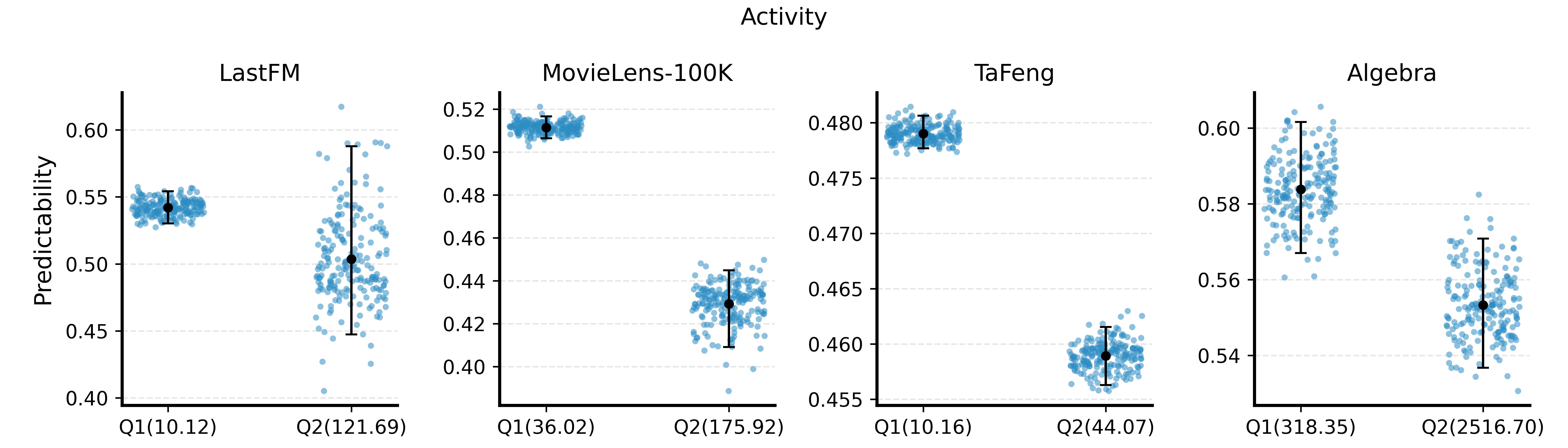}
	\caption{Analysis by user activity. Users are grouped by their number of interactions to examine how behavioral-signal density affects predictability. Q1/Q2 denote the higher/lower activity groups; dots indicate user-level estimates, and black points with error bars show within-group aggregates and uncertainty.}
	\Description{Predictability under user activity groups across several datasets.}
	\label{fig:cohort_activity_pi}
\end{figure*}